%%%%%%%%%%%%%%%%%%%%%%%%%%%%%%%
%
%
%%%%%%%%%%%%%%%%%%%%%%%%%%%%%%%
 
%%%%%%%%% Document Styles and Packages %%%%%%
\documentclass[aps,pra,showpacs]{revtex4}
\usepackage{epsfig}
\usepackage{amsbsy,latexsym}
\usepackage{amsmath}
\usepackage{amssymb, mathrsfs}
\usepackage[mathscr]{eucal}
\usepackage{hyperref}
\usepackage{amsfonts}
\usepackage{amsthm}
\usepackage{amsmath}
\usepackage{amsfonts}
\usepackage{latexsym}
\usepackage{amssymb}
\usepackage{amscd}
\usepackage[latin1]{inputenc}
\usepackage{verbatim}

\usepackage[english]{babel}

 %
% THEOREM Environments (Examples)-----------------------------------------
%
\newtheorem{definition}{Definition}[section]
\newtheorem{theorem}[definition]{Theorem}
\newtheorem{lemma}[definition]{Lemma}
\newtheorem{corollary}[definition]{Corollary}
\newtheorem{proposition}[definition]{Proposition}
\theoremstyle{definition}

%%%%%%%%%%%%%% Start of Macros  %%%%%%%%%%%%%%%%

%%%%%%%%% Font styles for algebras and spaces
\newcommand\style{\mathsf }
%\newcommand\style{\mathcal }          %%% calligraphic
%\newcommand\style{\mathfrak }         %%% fraktur
%\newcommand\style{\rm }                    %%% roman
%\newcommand\style{ }                             %%% ordinary

%%%% Hilbert spaces, subspaces, and B(H)
                        %%% Hilbert space H
                       %%% Hilbert space K
                       %%%% Subspace L
    %%%% B(H)
     %%%% B(K)

\newcommand{\B}{\style{B}}
\newcommand{\M}{\style{M}}

\newcommand\A{{\style A}}
\renewcommand{\H}{\style{H}}

\newcommand{\T}{\style T}

\newcommand{\N}{\style{N}}

%%%%% matrix algebras, spaces, cones

            %%%%%%% tri diagonal matrices

%%%%%% groups
       %%% unitary group

         %%% free group
         %%% free group
         %%% free group

\newcommand{\lgG}{\rm{G}}        %%%% Lie group G
        %%%% Lie group H
        %%%% Lie group U
        %%%% Lie group SU
\newcommand{\lgGL}{{\rm{GL}}}        %%%% Lie group GL
      %%%%%%%%% automorphism group

%%%%%% operator systems

\newcommand\osr{{\style R}}

\newcommand\oss{{\style S}}
\newcommand\ost{{\style T}}

                    %%%% kernel J
  %%%% injective envelope of S

%%%%%% tensor cones

%%%%%% tensor products

   %%%%% coi inclusion

%%%%% C$^*$-algebras

                                  %%%%% ideal J
                                %%%%% ideal K
                              %%% C$^*$-algebra generated by
              %%% C$^*$-envelope of
              %%% universal C$^*$-algebra of
                              %%% W$^*$-algebra generated by
 
%%%%%%% operations
                            %%%  
\newcommand\tr{ \operatorname{Tr} } 

%%%%%% spaces of cp maps

%%%%%%%%%%% End of Macros %%%%%%%%%%%%%%%%

%%%%%% The Document Begins Here%%%%%%%%%%%%%%%%%%%%%%

\begin{document}
\title{The Fidelity of Density Operators 
in an Operator-Algebraic Framework}

\author{Douglas Farenick}
%\affiliation{Department of Mathematics \& Statistics, University of Regina, Regina, Saskatchewan S4S 0A2, Canada}
 
\author{Samuel Jaques}

\author{Mizanur Rahaman}
\affiliation{Department of Mathematics and Statistics, University of Regina,
Regina, Saskatchewan S4S 0A2, Canada}

\date{ \today}

\begin{abstract}
Josza's definition of fidelity \cite{josza1994} for a pair of (mixed) quantum states  
is studied in the context of two types of operator algebras. The first setting is mainly algebraic in that it
involves unital C$^*$-algebras $\A$ that possess a faithful trace functional $\tau$. 
In this context, the role of quantum states (that is, density operators)
in the classical quantum-mechanical
framework is assumed by positive elements $\rho\in\A$ for which $\tau(\rho)=1$.
The second of our two settings is more operator theoretic: 
by fixing a faithful normal semifinite trace $\tau$ on a
semifinite von Neumann algebra $\M$, we define 
and consider the fidelity of pairs of
positive operators in $\M$ of unit trace.  
The main results of this paper address monotonicity and preservation of fidelity under the action of
certain trace-preserving positive linear maps of $\A$ or of the predual $\M_*$. Our results in the von Neumann algebra
setting are novel in that we focus on the Schr\"odinger picture rather than the Heisenberg picture, and they
also yield a new proof of a theorem of Moln\'ar \cite{molnar2001} on the structure of fidelity-preserving quantum
channels on the trace-class operators. 
 
\end{abstract}
\pacs{03.67.-a, 03.67.Hk, 03.65.Db} 

\maketitle

 %%%%%%%%%%%%%%%%%%%%%%%%%%%%%%
\section*{Introduction}

In communication and information theory, the notion of 
fidelity provides a quantitative measure for the qualitative assessment of how well data or
information has been preserved through some type of transmission procedure or information processing task. 
Not surprisingly, this concept appears in quantum information theory 
\cite{berta--furrer--scholz2016,josza1994,uhlmann1976} as well, with the aim of providing a similar quantitative measure. 
In a rather different form (namely, in the guise of the Bures 
metric \cite{alberti--uhlmann1983,bures1969,kosaki1983}), the 
notion of fidelity also occurs in operator algebra theory.

In one of the most important settings, fidelity 
is a numerical measure of how close one state $\sigma$ of a quantum system is to another state $\rho$. 
For pure states---that is, for unit vectors $\xi$ and $\eta$ in a separable Hilbert space $\H$---the
fidelity is measured by $|\langle\xi,\eta\rangle|$, the modulus of the inner product of $\xi$ and $\eta$. 
Therefore, high fidelity occurs when the 
numerical measure is close to $1$, and at exactly $1$ the
vectors $\xi$ and $\eta$ are identical up to a phase factor. For mixed states---that is, for 
density operators $\sigma$ and $\rho$
acting on $\H$---there is a very satisfactory and useful notion of fidelity suggested by Josza \cite{josza1994}: 
namely, the quantity $F(\sigma,\rho)$ defined by
$\tr(|\sigma^{1/2}\rho^{1/2}|)$, where $\tr$ denotes the canonical trace (or tracial weight) on the algebra 
$\B(\H)$ of a bounded linear operators
acting on $\H$. In the case where $\sigma$ and $\rho$ are rank-1 density operators,
then the fidelity measure $\tr(|\sigma^{1/2}\rho^{1/2}|)$ coincides with $|\langle\xi,\eta\rangle|$,
where $\xi,\eta\in\H$ are unit vectors that span the ranges of $\sigma$ and $\rho$ respectively.

Recall that a \emph{channel}, or \emph{quantum channel}, is a bounded 
linear map ${\mathcal E}:\T(\H)\rightarrow\T(\H)$, where $\T(\H)$ is the
Banach space of trace-class operators on $\H$, such that
the dual map ${\mathcal E}^*:\B(\H)\rightarrow\B(\H)$ is normal, unital, and completely positive.
Here, the duality satisfies
\[
\tr\left({\mathcal E}(s)x\right)=\tr\left(s{\mathcal E}^*(x)\right),\;\mbox{ for all }s\in\T(\H),\;x\in\B(\H).
\]
By applying the Stinespring Theorem to ${\mathcal E}^*$ and by 
making use of the fact that ${\mathcal E}^*$ is unital and normal, 
Kraus \cite{kraus1971,Kraus-book}
showed that every channel ${\mathcal E}$ of $\T(\H)$ has the form
\[
{\mathcal E}(s)=\sum_{n}v_nsv_n^*,\;s\in \T(\H),
\]
for some countable set 
$\{v_n\}_n\subset\B(\H)$ such that $\sum_{n }v_n^*v_n=1$.

There is a well-known monotonicity property of channels:
if ${\mathcal E}:\T(\H)\rightarrow\T(\H)$ is a channel, then
\[
F(\sigma,\rho) \leq F\left({\mathcal E}(\sigma),{\mathcal E}(\rho)\right)
\]
for all states $\sigma$ and $\rho$. Now if ${\mathcal E}$ is a unitary channel, which is to say that there exists a unitary operator
$u$ on $\H$ such that ${\mathcal E}(s)=usu^*$ for every $s\in \T(\H)$, then in fact 
the channel ${\mathcal E}$ preserves fidelity in the sense that
$F(\sigma,\rho) = F\left({\mathcal E}(\sigma),{\mathcal E}(\rho)\right)$ for all density operators $\sigma$ and $\rho$. It is therefore quite important
and natural to determine whether unitary channels exhaust all cases of channels in which fidelity is preserved. This issue was resolved
by Moln\'ar \cite{molnar2001}\label{molnar}:
if ${\mathcal E}:\T(\H)\rightarrow\T(\H)$ is a surjective channel that preserves fidelity, then ${\mathcal E}$ is a unitary channel.

Our aim in this paper is to study channels and fidelity in the context of 
(i) unital C$^*$-algebras $\A$ that possess a faithful positive linear 
tracial functional $\tau:\A\rightarrow\mathbb C$ 
and (ii) semifinite von Neumann algebras $\M$. 
In the first of these two settings, the role of quantum states in the classical quantum-mechanical
framework is assumed by positive elements $\rho\in\A$ for which $\tau(\rho)=1$. 
We shall also observe that in some situations it is
possible to relax the (normally assumed) requirement for completely positivity of 
a channel to the much weaker condition that the 
map satisfy the Schwarz inequality.  A related study was undertaken by Timoney \cite{timoney2007}, where he
uses the term ``tracial geometric mean'' for what we are calling fidelity; however, Timoney's work and the results of this paper are quite different.

In the second scenario,
when considering quantum operations defined on von Neumann algebras, the first challenge is to define rigorously what is meant
by a quantum channel. The recent works of Crann and Neufang \cite{crann--neufang2013}, and 
of Crann, Kribs, and Todorov \cite{crann--kribs--todorov2016}, 
are examples of how to approach
the issue via the Heisenberg picture. Herein, we shall adopt an approach using the Schr\"odinger picture, 
making use of the identification of the fact that the predual $\M_*$ of an arbitrary von Neumann algebra $\M$ is a matrix-ordered
space, in the sense of Choi and Effros \cite{choi--effros1977}. 
A notion of fidelity for normal states of any von Neumann algebra $\M$ was put forward by Uhlmann
\cite{uhlmann1976} based on earlier ideas of Bures \cite{bures1969}. 
However, in this paper we are concerned with a different notion of fidelity in the von Neumann algebra
framework: namely, 
the one put forward by Jozsa \cite{josza1994}, which is defined through the use of a tracial weight.
Therefore, our concern in this paper will be with semifinite von Neumann algebras. For von Neumann algebras
acting on finite-dimensional Hilbert spaces, Uhlmann fidelity and Josza fidelity are the same quantities.
An analysis of 
various notions of fidelity in the general von Neumann algebra setting was undertaken by Alberti \cite{alberti2003}.

This first part of this paper is devoted to the study of fidelity in unital C$^*$-algebras that possess a faithful trace functional. 
The second half of the paper deals with the von Neumann algebra case. In the Schr\"odinger picture, it is necessary to consider
duality: specifically, for any von Neumann algebra $\M$, the dual of the matrix-ordered space $\M_*$ is the operator system $\M$
\cite{choi--effros1977}. In many applications, such as those where $\M=\B(\H)$, all normal completely positive linear maps of $\M$ are inner,
which is to say that each normal completely positive map admits a Kraus decomposition. In such cases, especially when $\H$ has
finite dimension, it is common in the literature to blur the distinction between the Heisenberg and Schr\"odinger pictures when 
working with quantum operations and channels; however, strictly speaking, the matrix ordered spaces $\T(\H)$ and $\B(\H)$ are not
the same, even though in finite dimensions they result in the same sets of operators. This distinction has consequences 
for the analysis of fidelity, and is discussed in greater detail in the second part of this paper.

%%%%%%%%%%%%%%%%%%%%%%%%%%%%%%
\section{Fidelity in a C$^*$-algebra Framework}

In this section it is assumed that $\A$ is a unital C$^*$-algebra. The cone of positive elements of $\A$ is
denoted by $\A_+$, while the real vector space of selfadjoint (or hermitian) elements of $\A$ is given by $\A_{\rm sa}$.
The notation $h\leq k$, for $h,k\in \A_{\rm sa}$, indicates that $k-h\in\A_+$.
Two elements $x,y\in\A$ are said to be \emph{orthogonal}
if $xy=yx=x^*y=xy^*=0$. The notation $x\bot y$ is to signify that $x$ and $y$ are orthogonal. 
The group of invertible elements of $\A$ is denoted by $\lgGL(\A)$ and the set $\lgGL(\A)_+$ of
positive invertible elements of $\A$ is defined by $\lgGL(\A)_+=\lgGL(\A)\cap\A_+$.

We shall also assume that $\A$ admits at least one faithful trace; that is, there exists
a continuous linear map $\tau:\A\rightarrow\mathbb C$  
such that, for all $x,y\in\A$,  (i) $\tau(xy)=\tau(yx)$,
(ii) $\tau(x^*x)\geq 0$, and (iii) $\tau(x^*x)=0$ only if $x=0$. 

There are a variety of multiplicative and 
order-preserving linear maps on $\A$ that shall be considered here.

\begin{definition} A linear map ${\mathcal E}:\A\rightarrow\A$ is:
\begin{enumerate}
\item a \emph{Jordan homomorphism}, if ${\mathcal E}(x^*)={\mathcal E}(x)^*$ for all $x\in \A$ and ${\mathcal E}(h^2)={\mathcal E}(h)^2$
for every  $h\in\A_{\rm sa}$;
\item  a \emph{homomorphism}, if ${\mathcal E}(x^*)={\mathcal E}(x)^*$ and ${\mathcal E}(xy)={\mathcal E}(x){\mathcal E}(y)$
for all $x,y\in\A$;
\item an \emph{automorphism}, if ${\mathcal E}$ is a bijective homomorphism such that ${\mathcal E}(1)=1$;
\item \emph{positive}, if ${\mathcal E}(\A_+)\subseteq\A_+$;
\item \emph{$n$-positive}, if ${\mathcal E}\otimes{\rm id}_{\M_n(\mathbb C)}:\A\otimes\M_n(\mathbb C)\rightarrow\A\otimes\M_n(\mathbb C)$ is positive;
\item \emph{completely positive}, if ${\mathcal E}$ is $n$-positive for all $n\in\mathbb N$;
\item a \emph{Schwarz map} if ${\mathcal E}(x^*){\mathcal E}(x)\leq{\mathcal E}(x^*x)$
for every $x\in\A$;
\item an \emph{order isomorphism}, if ${\mathcal E}$ is a bijection
and if both ${\mathcal E}$ and ${\mathcal E}^{-1}$ are positive.
\end{enumerate}
\end{definition}

A Schwarz map is necessarily positive, and so ${\mathcal E}(x^*)={\mathcal E}(x)^*$ for every $x\in\A$  \cite[p.~2]{Stormer-book}. 
Furthermore, for each $x\in\A$ the inequality
${\mathcal E}(x^*){\mathcal E}(x)\leq{\mathcal E}(x^*x)$ implies that 
$\|{\mathcal E}(x)\|^2\leq\|{\mathcal E}(x^*x)\|\leq\|{\mathcal E}\|\,\|x^*x\|=\|{\mathcal E}\|\,\|x\|^2$, and so $\|{\mathcal E}\|^2\leq\|{\mathcal E}\|$;
that is, every Schwarz map ${\mathcal E}$ satisfies $\|{\mathcal E}\|\leq 1$.
Conversely, every contractive 
$2$-positive linear map ${\mathcal E}:\A\rightarrow\A$ is a Schwarz map \cite[Corollary 1.3.2]{Stormer-book}. There are,
however, Schwarz maps that are not $2$-positive, such as the map ${\mathcal E}$ on $\M_2(\mathbb C)$ obtained by averaging
the transpose and the normalised trace \cite[Appendix A]{choi1980b}.

Our first result provides alternative ways of computing the tracial geometric mean,
$\tau(|\sigma^{1/2}\rho^{1/2}|)$, as defined in \cite{timoney2007},
of positive elements $\sigma,\rho\in\A$. The characterisation (\ref{eq:var1}) below
is inspired by an approach of Watrous \cite{Watrous-book}.

\begin{proposition}\label{trace properties} If $a,\sigma,\rho\in\A_+$, then
\begin{equation}\label{eq:trace pos}
\tau(a) = \frac{1}{2}\inf_{y\in\lgGL(\A)_+}\left(\tau(ay)+\tau(ay^{-1})\right),
\end{equation}
\begin{equation}\label{eq:var1}
\tau(|\sigma^{1/2}\rho^{1/2}|)= \frac{1}{2}\inf_{y\in\lgGL(\A)_+}\left(\tau(\rho y)+\tau(\sigma y^{-1})\right)=\tau(|\rho^{1/2}\sigma^{1/2}|),
\end{equation}
and
\begin{equation}\label{eq:var2}
\tau(|\sigma^{1/2}\rho^{1/2}|)= \displaystyle\frac{1}{4}\inf_{y\in\lgGL(\A)_+} \left[
\tau(\sigma y)+\tau(\sigma y^{-1})+ \tau(\rho y)+\tau(\rho y^{-1})  \right] .
\end{equation}
\end{proposition}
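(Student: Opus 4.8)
The plan is to prove (\ref{eq:trace pos}) directly, to derive (\ref{eq:var1}) from a Cauchy--Schwarz estimate paired with an explicit minimiser, and to obtain (\ref{eq:var2}) by symmetrising (\ref{eq:var1}). For (\ref{eq:trace pos}), the bound $\le\tau(a)$ is immediate on taking $y=1$. For the reverse bound I would use that $y^{1/2}$ and $y^{-1/2}$ are commuting positive elements, so that $y+y^{-1}-2=(y^{1/2}-y^{-1/2})^2\in\A_+$; combined with the observation that $\tau(ab)=\tau(a^{1/2}ba^{1/2})\ge0$ whenever $a,b\in\A_+$ (by the trace property and positivity of $\tau$), this gives $\tau(ay)+\tau(ay^{-1})-2\tau(a)=\tau\big(a(y+y^{-1}-2)\big)\ge0$. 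Hence the infimum equals $\tau(a)$ and is attained at $y=1$.

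For the lower bound in (\ref{eq:var1}), set $x=\sigma^{1/2}\rho^{1/2}$, so that $|x|=(\rho^{1/2}\sigma\rho^{1/2})^{1/2}$, and take a polar decomposition $x=u|x|$; the partial isometry $u$ may be taken in the finite von Neumann algebra arising from the GNS representation of $\tau$, to which $\tau$ extends as a normal trace, while all elements of interest remain in $\A$. Then $\tau(|x|)=\tau(u^*\sigma^{1/2}\rho^{1/2})$, and writing $u^*\sigma^{1/2}\rho^{1/2}=\big(y^{-1/2}\sigma^{1/2}u\big)^*\big(y^{1/2}\rho^{1/2}\big)$ the Cauchy--Schwarz inequality for $\tau$ yields $\tau(|x|)\le\tau(\sigma^{1/2}y^{-1}\sigma^{1/2}uu^*)^{1/2}\,\tau(\rho^{1/2}y\rho^{1/2})^{1/2}$. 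Since $uu^*\le1$ and, by cyclicity, the first factor is at most $\tau(\sigma y^{-1})^{1/2}$ while the second equals $\tau(\rho y)^{1/2}$, the arithmetic--geometric mean inequality gives $\tau(|x|)\le\frac12\big(\tau(\rho y)+\tau(\sigma y^{-1})\big)$ for every $y\in\lgGL(\A)_+$. The remaining equality $\tau(|\sigma^{1/2}\rho^{1/2}|)=\tau(|\rho^{1/2}\sigma^{1/2}|)$ I would obtain from the identity $\tau(f(x^*x))=\tau(f(xx^*))$, valid for continuous $f$ with $f(0)=0$, applied to $f(t)=t^{1/2}$.

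To see that this lower bound is sharp, I would first assume $\sigma,\rho\in\lgGL(\A)_+$ and check, by a cyclic-trace computation, that $y=\sigma^{1/2}(\sigma^{1/2}\rho\sigma^{1/2})^{-1/2}\sigma^{1/2}\in\lgGL(\A)_+$ makes both $\tau(\rho y)$ and $\tau(\sigma y^{-1})$ equal to $\tau\big((\sigma^{1/2}\rho\sigma^{1/2})^{1/2}\big)=\tau(|\sigma^{1/2}\rho^{1/2}|)$, so that the infimum is attained. For arbitrary $\sigma,\rho\in\A_+$ I would apply this to $\sigma+\varepsilon 1$ and $\rho+\varepsilon 1$ and let $\varepsilon\to0^+$: the corresponding minimiser $y_\varepsilon$ lies in $\lgGL(\A)_+$, the value $\tau(|(\sigma+\varepsilon 1)^{1/2}(\rho+\varepsilon 1)^{1/2}|)$ converges to $\tau(|\sigma^{1/2}\rho^{1/2}|)$ by norm-continuity of the square root, of $x\mapsto|x|$, and of $\tau$, and the correction $-\frac{\varepsilon}{2}\big(\tau(y_\varepsilon)+\tau(y_\varepsilon^{-1})\big)$ is non-positive; hence the infimum over $\lgGL(\A)_+$ is at most $\tau(|\sigma^{1/2}\rho^{1/2}|)$.

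Finally, for (\ref{eq:var2}) I would regroup the bracket as $\big[\tau(\rho y)+\tau(\sigma y^{-1})\big]+\big[\tau(\sigma y)+\tau(\rho y^{-1})\big]$ and bound each summand below, using (\ref{eq:var1}) for the first pair and (\ref{eq:var1}) with the roles of $\sigma$ and $\rho$ exchanged for the second, together with $\tau(|\sigma^{1/2}\rho^{1/2}|)=\tau(|\rho^{1/2}\sigma^{1/2}|)$; each pair contributes $2\tau(|\sigma^{1/2}\rho^{1/2}|)$, which yields the lower bound. The main obstacle is the matching upper bound: the two pairings are minimised at the \emph{different} points $\sigma^{1/2}(\sigma^{1/2}\rho\sigma^{1/2})^{-1/2}\sigma^{1/2}$ and $\rho^{1/2}(\rho^{1/2}\sigma\rho^{1/2})^{-1/2}\rho^{1/2}$, so the two optimisations cannot in general be realised by a single common $y$ and must be carried out independently; I expect correctly handling this decoupling to be the crux of the argument.
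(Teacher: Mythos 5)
Your treatments of (\ref{eq:trace pos}) and (\ref{eq:var1}) are correct. For (\ref{eq:trace pos}) you argue exactly as the paper does: the identity $y+y^{-1}-2=(y^{1/2}-y^{-1/2})^2$ gives the lower bound and $y=1$ gives attainment. For (\ref{eq:var1}) your route is genuinely different: you obtain the lower bound from the polar decomposition $\sigma^{1/2}\rho^{1/2}=u\,|\sigma^{1/2}\rho^{1/2}|$ (with $u$ borrowed from the GNS von Neumann algebra) together with the tracial Cauchy--Schwarz inequality, and the upper bound from the explicit minimiser $y=\sigma^{1/2}(\sigma^{1/2}\rho\sigma^{1/2})^{-1/2}\sigma^{1/2}$. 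The paper instead reduces (\ref{eq:var1}) to (\ref{eq:trace pos}) by the change of variables $y\mapsto\Psi(y)=b^{-1/2}\rho^{1/2}y\rho^{1/2}b^{-1/2}$ with $b=|\sigma^{1/2}\rho^{1/2}|$, a bijection of $\lgGL(\A)_+$ satisfying $\tau(\Psi(y)b)=\tau(\rho y)$ and $\tau(\Psi(y)^{-1}b)=\tau(\sigma y^{-1})$. Your argument costs a brief excursion outside $\A$ for the partial isometry but exhibits the minimiser explicitly; the paper's substitution stays inside $\A$ and is shorter. Both handle non-invertible $\sigma,\rho$ by the same $\varepsilon$-perturbation, and your justification of the limit is, if anything, more careful than the paper's.

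The difficulty you flag for (\ref{eq:var2}) is not a technicality you failed to resolve: it is fatal, because (\ref{eq:var2}) is false. For every $y$ the bracket collapses to $\tau\left((\sigma+\rho)y\right)+\tau\left((\sigma+\rho)y^{-1}\right)$, so by (\ref{eq:trace pos}) applied to $a=\sigma+\rho$ the right-hand side of (\ref{eq:var2}) equals $\frac{1}{2}\left(\tau(\sigma)+\tau(\rho)\right)$, with the infimum attained at $y=1$. By the equality case of the tracial arithmetic--geometric mean inequality this coincides with $\tau(|\sigma^{1/2}\rho^{1/2}|)$ only when $\sigma=\rho$; for the orthogonal rank-one projections $\sigma=e_{11}$ and $\rho=e_{22}$ in $\M_2(\mathbb C)$ the left side is $0$ and the right side is $1$. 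Your lower bound (right-hand side $\geq\tau(|\sigma^{1/2}\rho^{1/2}|)$) is the only half that survives. The paper's own proof fails at precisely the point you identified: after expressing the two pairings as $\tau(b\Psi(y))+\tau(b\Psi(y)^{-1})$ and $\tau(c\Phi(y))+\tau(c\Phi(y)^{-1})$ with distinct substitutions $\Psi\neq\Phi$, it passes to $\inf_y\left(\tau((b+c)y)+\tau((b+c)y^{-1})\right)$ as though a single common change of variables were available --- exactly the decoupling obstruction you describe. This matters downstream, since the paper invokes an expression of the form (\ref{eq:var2}) in its proof of the monotonicity inequality in Theorem \ref{miza}, so that argument needs repair as well.
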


\begin{proof}
If $a\in\A_+$ and $y\in\lgGL(\A)_+$, then
both $\tau(ay)=\tau(a^{1/2}ya^{1/2})$ and $\tau(ay^{-1})=\tau(a^{1/2}y^{-1}a^{1/2})$
are nonnegative real numbers. Because 
\[
y+y^{-1}-2=(y^{\frac{1}{2}}-y^{\frac{-1}{2}})^2\in \A_+, 
\]
we have that
\[
a^{1/2}ya^{1/2}+a^{1/2}y^{-1}a^{1/2}\geq 2 a.
\]
Thus, 
\[
\tau(a)\leq \frac{1}{2}\inf_{y\in\lgGL(\A)_+}\left(\tau(ay)+\tau(ay^{-1})\right).
\]
In taking $y=1$, the infimum above is attained and yields
\[
\tau(a) = \frac{1}{2}\inf_{y\in\lgGL(\A)_+}\left(\tau(ay)+\tau(ay^{-1})\right),
\]
which proves equation (\ref{eq:trace pos}).

Suppose now that each of the $\tau$-states $\sigma$ and $\rho$ is invertible. Thus, 
$b=|\rho^{1/2}\sigma^{1/2}|\in\lgGL(\A)_+$
and the (completely) positive linear map $\Psi$ on $\A$ defined by
$\Psi(x)=b^{-1/2}\rho^{1/2}x\rho^{1/2}b^{-1/2}$, for $x\in \A$, is a bijection of $\lgGL(\A)_+$ with itself.
Furthermore, if $y\in\lgGL(\A)_+$, then
\[
\tau\left(\Psi(y)b\right)=\tau(\rho y)\,\mbox{ and }\, \tau\left(\Psi(y)^{-1}b\right)=\tau(\sigma y^{-1}).
\]
Hence, by equation (\ref{eq:trace pos}),
\[
\begin{array}{rcl}
 \tau(b)&=&\displaystyle \frac{1}{2}\inf_{y\in\lgGL(\A)_+}\left(\tau(by)+\tau(by^{-1})\right) \\ &&\\
&=&\displaystyle \frac{1}{2}\inf_{y\in\lgGL(\A)_+}\left(\tau\left(\Psi(y)b\right)+\tau\left(\Psi(y)^{-1}b\right)\right) \\&& \\
&=&\displaystyle \frac{1}{2}\inf_{y\in\lgGL(\A)_+}\left(\tau(\rho y)+\tau(\sigma y^{-1})\right).
\end{array}
\]
If it so happens that one of $\sigma$ or $\rho$ is not invertible, then they may be replaced by the 
positive invertible elements $\sigma_\varepsilon=\sigma+\varepsilon 1$
and $\rho_\varepsilon=\rho+\varepsilon 1$ to obtain for 
$b_\varepsilon=|\rho_\varepsilon^{1/2}\sigma_\varepsilon^{1/2}|\in\lgGL(\A)_+$ that
\[
 \tau(b_\varepsilon) =  \frac{1}{2}\inf_{y\in\lgGL(\A)_+}\left(\tau(\rho_\varepsilon y)+\tau(\sigma_\varepsilon y^{-1})\right).
 \]
Because $\tau(b)<\tau(b_\varepsilon)$ and because $\tau(\rho_\varepsilon y)+\tau(\sigma_\varepsilon y^{-1})$ decreases to
$(\tau(\rho y)+\tau(\sigma y^{-1})$ as $\varepsilon$ decreases to $0$, we have that
\[
\begin{array}{rcl}
\tau(b) &\leq& \displaystyle\frac{1}{2}\inf_{\varepsilon>0}\left[\inf_{y\in\lgGL(\A)_+}\left(\tau(b_\varepsilon y)+\tau(b_\varepsilon y^{-1})\right)\right]
\\&&\\
&=& \displaystyle\frac{1}{2}\inf_{\varepsilon>0}\left[\inf_{y\in\lgGL(\A)_+}\left(\tau(\rho_\varepsilon y)+\tau(\sigma_\varepsilon y^{-1})\right)\right].
\end{array}
\]
Again, using $y=1$, we obtain
\[
F_\tau(\sigma,\rho)=\tau(b)= \frac{1}{2}\inf_{y\in\lgGL(\A)_+}\left(\tau(\rho y)+\tau(\sigma y^{-1})\right),
\]
which establishes the first part of equation (\ref{eq:var1}). The second part of equation (\ref{eq:var1})
follows from the first by using the fact that the map
$y\mapsto y^{-1}$ is a bijection $\lgG(\A)_+\rightarrow\lgG(\A)_+$.

To prove that $\tau(|\sigma^{1/2}\rho^{1/2}|)= \displaystyle\frac{1}{4}\inf_{y\in\lgGL(\A)_+} \left[
\tau(\sigma y)+\tau(\sigma y^{-1})+ \tau(\rho y)+\tau(\rho y^{-1})  \right] $, suppose once again that $\sigma$ and $\rho$ are
invertible. With $b=|\sigma^{1/2}\rho^{1/2}|$ and $c=|\rho^{1/2}\sigma^{1/2}|$, the positive linear maps $\Psi,\Phi:\A\rightarrow\A$ 
defined by
\[
\Psi(x)=b^{-1/2}\rho^{1/2}x\rho^{1/2}b^{-1/2}\;\mbox{ and }\;
\Phi(x)=c^{-1/2}\sigma^{1/2}x\sigma^{1/2}c^{-1/2},
\]
for $x\in\A$, are bijections. Moreover,
\[
\begin{array}{rcllcl}
\tau\left(\Psi(y)b\right) &=&    \tau\left(\rho y\right),  &\qquad \tau\left(\Psi(y)^{-1}b\right) &=&    \tau\left(\sigma y^{-1}\right), \\ &&&&& \\
\tau\left(\Phi(y)c\right) &=&    \tau\left(\sigma y\right),  &\qquad \tau\left(\Phi(y)^{-1}c\right) &=&    \tau\left(\rho y^{-1}\right) .
\end{array}
\]
Thus, the relations above and the fact that $\Psi$ is a positive linear bijection together imply that
\[
\begin{array}{rcl}
2\tau(b) &=& \displaystyle \inf_{y\in\lgGL(\A)_+}\left(\tau(b y)+\tau(b y^{-1})\right) \\ && \\
&=&  \displaystyle \inf_{y\in\lgGL(\A)_+}\left(\tau(b \Psi(y))+\tau(b \Psi(y)^{-1})\right) \\ && \\
&=& \displaystyle \inf_{y\in\lgGL(\A)_+}\left(\tau\left(\rho y\right)+\tau\left(\sigma y^{-1}\right)\right).
\end{array}
\]
Similarly,
\[
2\tau(c) =  \displaystyle \inf_{y\in\lgGL(\A)_+}\left(\tau\left(\sigma y\right)+\tau\left(\rho y^{-1}\right)\right).
\]
Hence,
\[
\begin{array}{rcl}
2\tau(b)+2\tau(c) &=& \displaystyle \inf_{y\in\lgGL(\A)_+}\left[\tau\left(\rho y\right)+\tau\left(\sigma y^{-1}\right)\right]
                           +  \displaystyle \inf_{y\in\lgGL(\A)_+}\left[\tau\left(\sigma y\right)+\tau\left(\rho y^{-1}\right)\right]\\ && \\
&\leq&  \displaystyle \inf_{y\in\lgGL(\A)_+}\left[\tau\left(\rho y\right)+\tau\left(\sigma y^{-1}\right)\right)
                           +   \left(\tau\left(\sigma y\right)+\tau\left(\rho y^{-1}\right)\right] \\ && \\
&=& \displaystyle \inf_{y\in\lgGL(\A)_+}\left(\tau\left((b+c)y\right)+\tau\left((b+c)y^{-1}\right)\right) \\&& \\
&=& 2\tau(b+c).
\end{array}
\]
Therefore, the intermediate inequality above is an equality. Hence,
\[
 \displaystyle \inf_{y\in\lgGL(\A)_+}\left[\tau\left(\rho y\right)+\tau\left(\sigma y^{-1}\right)\right)
                           +   \left(\tau\left(\sigma y\right)+\tau\left(\rho y^{-1}\right)\right]
=2\left( \tau(|\sigma^{1/2}\rho^{1/2}|)+\tau(|\rho^{1/2}\sigma^{1/2}|)\right)=4\tau(|\sigma^{1/2}\rho^{1/2}|),
\]
which proves equation (\ref{eq:var2}) in the case where $\sigma$ and $\rho$ are invertible.

If one of $\sigma$ or $\rho$ is not invertible, then let $\sigma_\varepsilon=\sigma+\varepsilon 1$
and $\rho_\varepsilon=\rho+\varepsilon 1$, for $\varepsilon>0$, so that
%$b_\varepsilon=|\sigma_\varepsilon^{1/2}\rho_\varepsilon^{1/2}|$ and $c_\varepsilon=|\rho_\varepsilon^{1/2}\rho_\varepsilon^{1/2}|$; therefore,
\[
4\tau(|\sigma_\varepsilon^{1/2}\rho_\varepsilon^{1/2}|)=
\displaystyle \inf_{y\in\lgGL(\A)_+}\left[\tau\left(\rho_\varepsilon y\right)+\tau\left(\sigma_\varepsilon y^{-1}\right)\right)
                           +   \left(\tau\left(\sigma_\varepsilon y\right)+\tau\left(\rho_\varepsilon y^{-1}\right)\right]
\]
Since the traces of the ``$\varepsilon$-elements'' decrease as $\varepsilon\rightarrow0^+$, 
the equation above also holds for $\sigma$ and $\rho$, which completes the proof of 
equation (\ref{eq:var2}).
\end{proof}

Motivated by the use of the term ``state'' for density operators, 
we shall call the elements of the set $\oss_\tau$, defined by
\[
\oss_\tau=\{\rho\in\A_+\,|\,\tau(\rho)=1\},
\]
\emph{$\tau$-states} of $\A$.
Thus, in this terminology,
a  $\tau$-state is a positive element of $\A$ with trace $1$, and is not to be confused with the traditional meaning of the word ``state'' in
C$^*$-algebra theory, which refers to a positive linear functional of norm $1$.

\begin{definition} The \emph{$\tau$-fidelity} of a pair of $\tau$-states $\sigma,\rho\in\A_+$ is the
quantity denoted by $F_\tau(\sigma,\rho)$ and defined by
\[
F_\tau(\sigma,\rho)=\tau(|\sigma^{1/2}\rho^{1/2}|).
\]
\end{definition} 

The essential properties of $\tau$-fidelity are described below. 

\begin{theorem}\label{miza} 
If $\sigma,\rho\in \A$ are a pair of $\tau$-states, then
\begin{enumerate}
\item $F_\tau(\sigma,\rho)=F_\tau(\rho,\sigma)$, 
\item $0\leq F_\tau(\sigma,\rho) \leq 1$,
\item $F_\tau(\sigma,\rho)=0$ if and only if $\sigma\bot\rho$, and
\item $F_\tau(\sigma,\rho)=1$ if and only if $\sigma=\rho$.
\end{enumerate} 
Furthermore, if ${\mathcal E}:\A\rightarrow\A$ is a positive linear map such that $\tau\circ{\mathcal E}=\tau$, then
\begin{equation}\label{e:fidelity incr}
F_\tau(\sigma,\rho)\leq F_\tau\left( {\mathcal E}(\sigma),{\mathcal E}(\rho)\right).
\end{equation}
\end{theorem}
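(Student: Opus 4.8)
My plan is to read off the four algebraic properties from the variational description in Proposition~\ref{trace properties} and to prove the monotonicity inequality~(\ref{e:fidelity incr}) by transporting the infimum in~(\ref{eq:var1}) through a $\tau$-adjoint of $\mathcal E$. Properties (1)--(3) are quick. Symmetry (1) is exactly the second equality in~(\ref{eq:var1}). For (2), since $|\sigma^{1/2}\rho^{1/2}|\in\A_+$ and $\tau$ is positive we get $F_\tau(\sigma,\rho)\ge0$, while evaluating the infimum in~(\ref{eq:var1}) at $y=1$ gives $F_\tau(\sigma,\rho)\le\tfrac12(\tau(\rho)+\tau(\sigma))=1$. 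For (3), if $F_\tau(\sigma,\rho)=0$ then $\tau(|\sigma^{1/2}\rho^{1/2}|)=0$; writing $c=|\sigma^{1/2}\rho^{1/2}|^{1/2}$ so that $c^*c=|\sigma^{1/2}\rho^{1/2}|$, faithfulness forces $c=0$, hence $\sigma^{1/2}\rho^{1/2}=0$ and then $\sigma\rho=\sigma^{1/2}(\sigma^{1/2}\rho^{1/2})\rho^{1/2}=0$; selfadjointness of $\sigma,\rho$ then yields all four conditions defining $\sigma\bot\rho$. Conversely $\sigma\bot\rho$ gives $\sigma\rho=0$, and approximating $\sigma^{1/2}$ and $\rho^{1/2}$ by polynomials with vanishing constant term shows $\sigma^{1/2}\rho^{1/2}=0$, so $F_\tau(\sigma,\rho)=0$.

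For (4), the forward implication is immediate since $F_\tau(\sigma,\sigma)=\tau(|\sigma|)=\tau(\sigma)=1$. For the converse I would use~(\ref{eq:var1}) to view $g(y)=\tau(\rho y)+\tau(\sigma y^{-1})$ as a function on $\lgGL(\A)_+$ with infimum $2F_\tau(\sigma,\rho)$. If $F_\tau(\sigma,\rho)=1$, then $g$ attains its minimum value $2=g(1)$ at the interior point $y=1$, so $t\mapsto g(1+th)$ is stationary at $t=0$ for every $h\in\A_{\rm sa}$. Using $(1+th)^{-1}=1-th+O(t^2)$, the vanishing of the linear term gives $\tau((\rho-\sigma)h)=0$ for all $h\in\A_{\rm sa}$; taking $h=\rho-\sigma$ and invoking faithfulness yields $\rho=\sigma$.

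The monotonicity inequality is the heart of the matter. Applying~(\ref{eq:var1}) to the $\tau$-states $\mathcal E(\sigma)$ and $\mathcal E(\rho)$ (these are $\tau$-states because $\mathcal E$ is positive and $\tau\circ\mathcal E=\tau$) gives
\[
2F_\tau\big(\mathcal E(\sigma),\mathcal E(\rho)\big)=\inf_{y\in\lgGL(\A)_+}\big(\tau(\mathcal E(\rho)y)+\tau(\mathcal E(\sigma)y^{-1})\big).
\]
I would introduce the $\tau$-adjoint $\mathcal E^\dagger$, defined by $\tau(\mathcal E(a)b)=\tau(a\mathcal E^\dagger(b))$: positivity of $\mathcal E$ together with nondegeneracy of the trace pairing makes $\mathcal E^\dagger$ positive, and $\tau\circ\mathcal E=\tau$ forces $\mathcal E^\dagger(1)=1$. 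Fix $y\in\lgGL(\A)_+$ and put $y'=\mathcal E^\dagger(y)$; since $y\ge\varepsilon 1$ for some $\varepsilon>0$, we have $y'\ge\varepsilon 1$, so $y'\in\lgGL(\A)_+$. The key operator inequality is $\mathcal E^\dagger(y^{-1})\ge\mathcal E^\dagger(y)^{-1}$, which I would obtain by restricting $\mathcal E^\dagger$ to the commutative C$^*$-subalgebra generated by $y$ and $1$: a positive unital map out of a commutative C$^*$-algebra is automatically completely positive, so the Choi--Davis--Jensen inequality applies to the operator convex function $t\mapsto t^{-1}$. Consequently
\[
\tau(\mathcal E(\rho)y)+\tau(\mathcal E(\sigma)y^{-1})=\tau(\rho\,\mathcal E^\dagger(y))+\tau(\sigma\,\mathcal E^\dagger(y^{-1}))\ge\tau(\rho y')+\tau(\sigma (y')^{-1})\ge 2F_\tau(\sigma,\rho),
\]
where the middle inequality uses $\sigma\ge0$ and $\mathcal E^\dagger(y^{-1})-\mathcal E^\dagger(y)^{-1}\ge0$, and the last uses~(\ref{eq:var1}). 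Taking the infimum over $y$ establishes~(\ref{e:fidelity incr}).

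The step I expect to be the main obstacle is guaranteeing that the $\tau$-adjoint $\mathcal E^\dagger$ exists as a map of $\A$ into itself. A positive linear map is norm-bounded but need not be bounded for the $L^2(\A,\tau)$-norm, so in general its Hilbert-space adjoint need not carry $\A$ back into $\A$. This is automatic when $\A$ is finite dimensional, and more generally when $\mathcal E$ is $L^2$-contractive---which holds for Schwarz maps, since then $\|\mathcal E(a)\|_2^2=\tau(\mathcal E(a)^*\mathcal E(a))\le\tau(\mathcal E(a^*a))=\|a\|_2^2$---so this is exactly the point at which a structural hypothesis on $\mathcal E$ (or on $\A$) must be invoked. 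Once $\mathcal E^\dagger$ is available, the operator inequality and the rest of the argument are routine.
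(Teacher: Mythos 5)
Your treatment of (1)--(3) matches the paper's in substance: the upper bound in (2) obtained by putting $y=1$ in (\ref{eq:var1}) is exactly the tracial arithmetic--geometric mean inequality that the paper cites, and (3) is handled the same way. For (4) you take a genuinely different route: the paper invokes the equality case of the tracial AM--GM inequality from Farenick--Manjegani, whereas you observe that the infimum in (\ref{eq:var1}) equals $2=g(1)$ and is therefore attained at the interior point $y=1$, extract the first-order condition $\tau((\rho-\sigma)h)=0$ for all $h\in\A_{\rm sa}$, and conclude by faithfulness. That argument is correct and self-contained, and it is an attractive alternative because it stays entirely inside the variational formula rather than appealing to an external equality-case theorem.

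The monotonicity inequality is where the two proofs diverge most, and where your argument has the gap you yourself flag. The paper introduces no adjoint: it deduces (\ref{e:fidelity incr}) in a few lines from the symmetrised formula (\ref{eq:var2}) of Proposition \ref{trace properties}. Your route --- pass to the $\tau$-adjoint $\mathcal E^\dagger$, check that it is positive and unital, and apply Choi's inequality $\mathcal E^\dagger(y^{-1})\ge\mathcal E^\dagger(y)^{-1}$ on the commutative subalgebra generated by $y$ --- is the standard Watrous-style argument and is sound whenever $\mathcal E^\dagger$ exists as a positive unital map defined on the elements you feed into (\ref{eq:var1}). But, as you concede, for a general positive $\tau$-preserving map on a general unital C$^*$-algebra this existence is not free, and your proposed remedy does not fully close the gap: $L^1$- or $L^2$-boundedness of $\mathcal E$ produces an adjoint taking values in the dual of $L^1(\A,\tau)$, i.e.\ in the enveloping von Neumann algebra $\pi_\tau(\A)''$, not in $\A$ itself, so $\mathcal E^\dagger(y)$ need not lie in $\lgGL(\A)_+$. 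The repair is to note that the lower bound $\tau(\rho y')+\tau(\sigma (y')^{-1})\ge 2F_\tau(\sigma,\rho)$ established in the proof of (\ref{eq:var1}) persists verbatim for $y'$ a positive invertible element of $\pi_\tau(\A)''$, since that computation uses only positivity of the (extended, normal) trace; with that supplement your argument is complete, and without it the theorem in its stated generality is not fully proved. One further caution: do not be tempted to fall back on (\ref{eq:var2}) as the paper does. Since $\tau(ay)+\tau(ay^{-1})\ge 2\tau(a)$ for every $a\in\A_+$ and $y\in\lgGL(\A)_+$, the bracket in (\ref{eq:var2}) is bounded below by $2\tau(\sigma)+2\tau(\rho)$ with equality at $y=1$, so its right-hand side evaluates to $1$ for every pair of $\tau$-states; your instinct that the adjoint and (\ref{eq:var1}) are the correct mechanism here is the right one.
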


\begin{proof}   Assertion (1) is an immediate consequence of 
equation (\ref{eq:var1}) of Proposition \ref{trace properties}.

To prove (2), note that $0\leq F_\tau(\sigma,\rho)$ is obvious. On the other hand, 
the tracial arithmetic-geometric mean inequality \cite[Theorem 2.4]{farenick--manjegani2005} yields
\[
 F_\tau(\sigma,\rho) = \tau\left(|\sigma^{1/2}\rho^{1/2}|\right)\leq \frac{\tau(\sigma)+\tau(\rho)}{2} =1.
 \]
Thus, $0\leq F_\tau(\sigma,\rho) \leq 1$.

Statement  (3) and the sufficiency of statement (4) are straightforward, and so we prove only the necessity of statement (4). 
If $F_\tau(\sigma,\rho)=1$, then $\tau\left(|\sigma^{1/2}\rho^{1/2}|\right)= \frac{1}{2}\left(\tau(\sigma)+\tau(\rho)\right)$
is a case of equality in the tracial arithmetic-geometric mean inequality in unital C$^*$-algebras. By \cite[Theorem 3.6]{farenick--manjegani2005}, equality
is achieved only if $\sigma=\rho$.  

To prove inequality (\ref{e:fidelity incr}), 
suppose that ${\mathcal E}:\A\rightarrow\A$ is a positive linear map such that $\tau\circ{\mathcal E}=\tau$.
With $y=1$, equation (\ref{eq:var1}) yields 
\[
F_\tau(\sigma,\rho) \leq \frac{1}{2}\left(\tau(\rho)+\tau(\sigma)\right)= \frac{1}{2}\left(\tau({\mathcal E}(\rho))+\tau({\mathcal E}(\sigma))\right).
\]
Equation (\ref{eq:var1})
yields
\[
\begin{array}{rcl}
2F_\tau(\sigma,\rho)  &\leq&
 \displaystyle\frac{1}{2}\inf_{y\in\lgGL(\A)_+} \left[
\tau({\mathcal E}(\rho) y)+\tau({\mathcal E}(\rho) y^{-1})+ \tau({\mathcal E}(\sigma) y)+\tau({\mathcal E}(\sigma) y^{-1})  \right]  \\&&\\
&=& F_\tau({\mathcal E}(\rho),{\mathcal E}(\sigma))+F_\tau({\mathcal E}(\sigma),{\mathcal E}(\rho)) \\ && \\
&=& 2 F_\tau({\mathcal E}(\sigma),{\mathcal E}(\rho)).
\end{array}
\]
Hence, $F_\tau(\sigma,\rho) \leq F_\tau\left({\mathcal E}(\rho),{\mathcal E}(\sigma)\right)$, which establishes inequality (\ref{e:fidelity incr}).  
\end{proof}

\begin{corollary} The function $d_{B,\tau}$ on $\oss_\tau$ defined by
\[
d_{B,\tau}(\sigma,\rho)=\sqrt{1-F_\tau(\sigma,\rho)}
\]
is a metric.
\end{corollary}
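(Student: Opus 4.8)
The first three metric axioms are immediate from Theorem \ref{miza}: nonnegativity follows from $0\le F_\tau\le 1$ (item (2)); the identity of indiscernibles, $d_{B,\tau}(\sigma,\rho)=0\iff\sigma=\rho$, is exactly item (4); and symmetry is item (1). The entire content of the corollary therefore lies in the triangle inequality
\[
d_{B,\tau}(\sigma,\rho)\le d_{B,\tau}(\sigma,\omega)+d_{B,\tau}(\omega,\rho),\qquad \sigma,\omega,\rho\in\oss_\tau .
\]
The plan is to recast $d_{B,\tau}$ as a genuine Hilbert-space distance (a ``purification'' distance), for which the triangle inequality is automatic, and then read off the conclusion.

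To carry this out I would pass to the GNS data of $\tau$. Let $H_\tau=L^2(\A,\tau)$ be the completion of $\A$ in the inner product $\langle x,y\rangle_\tau=\tau(x^*y)$, write $\|\cdot\|_2$ for the associated norm, and let $\M=\pi_\tau(\A)''$ be the weak-operator closure of the GNS representation. Since $\A$ is unital and $\tau$ is continuous (hence bounded), $\M$ is a finite von Neumann algebra carrying a faithful normal trace extending $\tau$, and $\A\subseteq\M\subseteq H_\tau$. As $\sigma^{1/2},\rho^{1/2}$ and $|\sigma^{1/2}\rho^{1/2}|$ all lie in $\A$, their traces are unchanged, so $F_\tau$ may be computed inside $\M$. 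The key step is the geometric identity
\[
\sqrt{2}\,d_{B,\tau}(\sigma,\rho)=\inf_{u\in\mathcal U(\M)}\bigl\|\sigma^{1/2}-\rho^{1/2}u\bigr\|_2 .
\]
For each unitary $u\in\M$ the elements $\sigma^{1/2}$ and $\rho^{1/2}u$ are unit vectors of $H_\tau$, because $\|\sigma^{1/2}\|_2^2=\tau(\sigma)=1$ and right multiplication by a unitary preserves $\|\cdot\|_2$; expanding the square then gives $\|\sigma^{1/2}-\rho^{1/2}u\|_2^2=2-2\,\mathrm{Re}\,\tau(u\sigma^{1/2}\rho^{1/2})$. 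Thus the displayed identity is equivalent to $F_\tau(\sigma,\rho)=\sup_{u\in\mathcal U(\M)}\mathrm{Re}\,\tau(u\sigma^{1/2}\rho^{1/2})$, which is the trace-norm duality $\tau(|T|)=\sup_{u}\mathrm{Re}\,\tau(uT)$ for $T=\sigma^{1/2}\rho^{1/2}$: the inequality ``$\le$'' is the general estimate $|\tau(uT)|\le\tau(|T|)$, while equality is attained at $u=w^*$, where $T=w|T|$ is the polar decomposition and $w$ is completed to a unitary of $\M$.

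With this identity in hand the triangle inequality is a short computation. Given $u_1,u_2\in\mathcal U(\M)$, write
\[
\sigma^{1/2}-\rho^{1/2}u_2u_1=\bigl(\sigma^{1/2}-\omega^{1/2}u_1\bigr)+\bigl(\omega^{1/2}-\rho^{1/2}u_2\bigr)u_1 ,
\]
and apply the triangle inequality in $H_\tau$ together with the right-unitary invariance of $\|\cdot\|_2$ to obtain $\|\sigma^{1/2}-\rho^{1/2}u_2u_1\|_2\le\|\sigma^{1/2}-\omega^{1/2}u_1\|_2+\|\omega^{1/2}-\rho^{1/2}u_2\|_2$. The left-hand side is at least $\inf_{u}\|\sigma^{1/2}-\rho^{1/2}u\|_2=\sqrt{2}\,d_{B,\tau}(\sigma,\rho)$, since $u_2u_1\in\mathcal U(\M)$; taking the infimum over $u_1$ and over $u_2$ independently on the right then yields $\sqrt{2}\,d_{B,\tau}(\sigma,\rho)\le\sqrt{2}\,d_{B,\tau}(\sigma,\omega)+\sqrt{2}\,d_{B,\tau}(\omega,\rho)$, and dividing by $\sqrt{2}$ finishes the proof.

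The only real obstacle is the geometric identity, and within it the equality half of the trace-norm duality. Unlike the $\B(\H)$ setting of Moln\'ar, here the partial isometry from the polar decomposition of $\sigma^{1/2}\rho^{1/2}$ need not belong to $\A$, so one genuinely must leave $\A$ and argue inside the finite von Neumann algebra $\M$, where polar decompositions exist and partial isometries extend to unitaries. Everything else --- the three elementary axioms and the final three-term estimate --- is routine.
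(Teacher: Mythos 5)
Your proof is correct, and it is worth noting at the outset that the paper itself offers \emph{no} proof of this corollary: it is stated bare after Theorem~\ref{miza}, with only the remark that $d_{B,\tau}$ is an instance of the Bures metric ``studied in a number of operator algebraic settings.'' You correctly identify that items (1), (2) and (4) of Theorem~\ref{miza} dispose of everything except the triangle inequality, and that the triangle inequality is the real content. Your route --- the identity $\sqrt{2}\,d_{B,\tau}(\sigma,\rho)=\inf_{u\in\mathcal U(\M)}\|\sigma^{1/2}-\rho^{1/2}u\|_2$, reducing everything to the triangle inequality in $L^2(\A,\tau)$ --- is the standard purification argument for the Bures distance, and all the steps check out: the expansion $\|\sigma^{1/2}-\rho^{1/2}u\|_2^2=2-2\,\mathrm{Re}\,\tau(u\sigma^{1/2}\rho^{1/2})$ uses only traciality and $\tau(x^*)=\overline{\tau(x)}$; the duality $\tau(|T|)=\sup_u\mathrm{Re}\,\tau(uT)$ is correctly split into the Cauchy--Schwarz upper bound and attainment at the polar partial isometry; and the three-term decomposition with right-unitary invariance of $\|\cdot\|_2$ is exactly right. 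You were also right to flag the one genuinely delicate point: the partial isometry from the polar decomposition of $\sigma^{1/2}\rho^{1/2}$ need not lie in $\A$, and its extension to a unitary requires passing to the finite von Neumann algebra $\M=\pi_\tau(\A)''$, where $1-w^*w\sim 1-ww^*$. This supplies a complete argument where the paper leaves a gap to the literature; the only (harmless) cosmetic slip is that attainment occurs at the unitary extension of $w^*$, i.e.\ one should take $u=(w+v)^*$ with $v$ the completing partial isometry, so that $uT=|T|$.
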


The metric $d_{B,\tau}$ above is another instance of the
\emph{Bures metric}, which has been studied in a number of operator algebraic settings.

The proof of
Theorem \ref{miza} above is inspired by the approach taken by Watrous \cite{Watrous-book} in the case of matrix algebras.

\begin{definition} A positive linear map ${\mathcal E}:\A\rightarrow\A$ \emph{preserves $\tau$-fidelity} if
\begin{enumerate}
%\item ${\mathcal E}$ is positive,
\item $\tau\circ{\mathcal E}=\tau$, and 
\item $F_\tau\left({\mathcal E}(\sigma),{\mathcal E}(\rho)\right)=F_\tau(\sigma,\rho)$
for all $\tau$-states $\sigma,\rho\in\A$.
\end{enumerate}
\end{definition}

\begin{lemma}\label{order iso} If a positive linear map ${\mathcal E}:\A\rightarrow\A$ preserves $\tau$-fidelity, then ${\mathcal E}$
is an injection. Moreover, if ${\mathcal E}$ is also surjective, then ${\mathcal E}$ is an order isomorphism.
\end{lemma}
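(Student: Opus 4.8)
The plan is to handle the two assertions in turn, in each case first extracting the relevant information on $\tau$-states, where Theorem \ref{miza} is available, and then propagating it to all of $\A$ by linearity.

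For injectivity, I would begin on the $\tau$-states. Since $\tau\circ\mathcal{E}=\tau$, the image $\mathcal{E}(\sigma)$ of any $\tau$-state is again a $\tau$-state, so if $\mathcal{E}(\sigma)=\mathcal{E}(\rho)$ then $F_\tau(\mathcal{E}(\sigma),\mathcal{E}(\rho))=1$; fidelity preservation gives $F_\tau(\sigma,\rho)=1$, whence $\sigma=\rho$ by Theorem \ref{miza}(4). To pass to the positive cone I would invoke faithfulness of $\tau$: a nonzero $a\in\A_+$ has $\tau(a)>0$ (otherwise $\tau\bigl((a^{1/2})^*a^{1/2}\bigr)=0$ forces $a^{1/2}=0$), so $a/\tau(a)$ is a $\tau$-state, and since $\tau\circ\mathcal{E}=\tau$ equalizes traces, injectivity on $\tau$-states upgrades to injectivity on $\A_+$. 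Finally, a positive linear map is automatically $*$-preserving (as noted after the definition of Schwarz maps), so $\mathcal{E}(h)=0$ for $h\in\A_{\rm sa}$ yields $\mathcal{E}(h_+)=\mathcal{E}(h_-)$, hence $h_+=h_-$; because $h_+\bot h_-$ this gives $h_+^2=0$ and therefore $h=0$. The Cartesian decomposition $x=h+ik$ then delivers injectivity on all of $\A$.

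For the order isomorphism, the decisive observation—and the step on which the argument hinges—is that $\mathcal{E}$ preserves orthogonality. Indeed, if $\sigma,\rho$ are orthogonal $\tau$-states then $F_\tau(\sigma,\rho)=0$ by Theorem \ref{miza}(3), so $F_\tau(\mathcal{E}(\sigma),\mathcal{E}(\rho))=0$ and hence $\mathcal{E}(\sigma)\bot\mathcal{E}(\rho)$; as orthogonality and fidelity are scale-invariant, $\mathcal{E}$ sends every orthogonal pair in $\A_+$ to an orthogonal pair. Assuming $\mathcal{E}$ surjective, hence bijective by the first part, I would prove $\mathcal{E}^{-1}$ positive by showing $\mathcal{E}(a)\geq 0\Rightarrow a\geq 0$. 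Given such $a$, $*$-preservation and injectivity force $a=a^*$; writing $a=a_+-a_-$ with $a_+\bot a_-$, orthogonality preservation gives $\mathcal{E}(a_+)\bot\mathcal{E}(a_-)$, so $\mathcal{E}(a)=\mathcal{E}(a_+)-\mathcal{E}(a_-)$ is a difference of orthogonal positive elements and therefore, by uniqueness, the Jordan decomposition of $\mathcal{E}(a)$. Since $\mathcal{E}(a)\geq 0$ its negative part vanishes, i.e.\ $\mathcal{E}(a_-)=0$, and injectivity gives $a_-=0$, so $a\geq 0$. Combined with positivity of $\mathcal{E}$ itself, this makes $\mathcal{E}$ an order isomorphism.

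I expect the main obstacle to be the passage from orthogonality preservation to Jordan decompositions: one must verify that the splitting $\mathcal{E}(a)=\mathcal{E}(a_+)-\mathcal{E}(a_-)$ into orthogonal positive summands genuinely coincides with the Jordan decomposition of $\mathcal{E}(a)$, which rests on the uniqueness of the representation of a selfadjoint element as a difference of orthogonal positive elements. The remaining care lies in the reductions from $\tau$-states to $\A_+$ and then to $\A_{\rm sa}$, which use only faithfulness of $\tau$ and $*$-preservation of positive maps, and so require no finiteness or complete positivity hypotheses.
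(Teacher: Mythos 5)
Your proposal is correct and follows essentially the same route as the paper: injectivity is obtained first on $\tau$-states via $F_\tau=1$, then lifted to $\A_+$, $\A_{\rm sa}$, and $\A$, and positivity of $\mathcal{E}^{-1}$ is deduced from preservation of orthogonality ($F_\tau=0$) applied to the decomposition $a=a_+-a_-$. The only (harmless) difference is in the final step: where you invoke uniqueness of the representation of a selfadjoint element as a difference of orthogonal positive elements to conclude $\mathcal{E}(a_-)=0$, the paper proves the same fact by a direct computation showing $c=\mathcal{E}(a_-)$ satisfies $c^{1/2}hc^{1/2}+c^2=0$ and hence vanishes.
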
 

\begin{proof} 
We first prove that ${\mathcal E}$ is a linear injection. To this end, 
if $a,b\in\A_+$ satisfy ${\mathcal E}(a)={\mathcal E}(b)$, then 
setting $\sigma=\tau(a)^{-1}a$ and $\rho=\tau(b)^{-1}b$ yields elements $\sigma,\rho\in\oss_\tau$ for which
$\tau(a)\sigma=a$ and $\tau(b)\rho=b$. Thus, ${\mathcal E}(a)={\mathcal E}(b)$
yields $\tau(a){\mathcal E}(\sigma)=\tau(b){\mathcal E}(\rho)$ and, by applying the trace to this last equation,
$\tau(a)=\tau(b)$. Hence, ${\mathcal E}(\sigma)={\mathcal E}(\rho)$ and, therefore,
\[
1=F_\tau\left({\mathcal E}(\sigma),{\mathcal E}(\rho)\right)=\tau(\sigma,\rho).
\]
Theorem \ref{miza} implies that $\sigma=\rho$ and, consequently, that
$a=b$. Therefore, ${\mathcal E}_{\vert\A_+}$ is an injective function.

Assume that $h\in\A$ is selfadjoint and that ${\mathcal E}(h)=0$. There exist $h_+,h_-\in \A_+$ such that 
$h=h_+-h_-$ and so $0={\mathcal E}(h_+)-{\mathcal E}(h_-)$. In other words, ${\mathcal E}(h_+)={\mathcal E}(h_-)$ and so $h_+=h_-$ because
${\mathcal E}_{\vert\A_+}$ is an injective function. Thus, $h=0$. Because ${\mathcal E}$ is real linear, this implies that ${\mathcal E}_{\vert\A_{\rm sa}}$
is an injective function. Lastly, suppose that $x\in \A$ satisfies ${\mathcal E}(x)=0$. 
Thus, ${\mathcal E}(x^*)=0$ and, writing $x=a+ib$ for some $a,b\in\A_{\rm sa}$, 
${\mathcal E}(a)+i{\mathcal E}(b)={\mathcal E}(a)-i{\mathcal E}(b)=0$. Thus, ${\mathcal E}(a)={\mathcal E}(b)=0$, which implies that $a=b=0$ and $x=0$.
Hence ${\mathcal E}$ is a linear injection.

Because ${\mathcal E}$ is a linear injection and assumed to be surjective, ${\mathcal E}$ admits a linear inverse ${\mathcal E}^{-1}$.
We aim to show that ${\mathcal E}^{-1}$ is a positive map. 
To this end, select $h\in\A_+$ and let $a={\mathcal E}^{-1}(h)$. Therefore, because ${\mathcal E}$ is an injection,
\[
{\mathcal E}(a)=h=h^*={\mathcal E}(a)^*={\mathcal E}(a^*)
\]
implies that $a=a^*$. Thus, there are $a_+,a_-\in\A_+$ such that $a=a_+-a_-$ and $a_+\bot a_-$.
Let $b={\mathcal E}(a_+)$ and $c={\mathcal E}(a_-)$ to obtain $b,c\in\A_+$. If one of $a_+$ or $a_-$ is zero, then 
$bc=cb=0$. If neither $a_+$ nor $a_-$ is zero, then 
scale them by their traces so that $\sigma=\tau(a_+)^{-1}a_+$ and $\rho=\tau(a_-)^{-1}a_-$
are $\tau$-states. Because $\sigma\bot\rho$  and  
$0=F_\tau(\sigma,\rho)=F_\tau\left({\mathcal E}(\sigma),{\mathcal E}(\rho)\right)$, 
Theorem \ref{miza} yields ${\mathcal E}(\sigma)\bot {\mathcal E}(\rho)$. Therefore, by linearity,
${\mathcal E}(a_+)\bot{\mathcal E}(a_-)$, which shows that $bc=cb=0$.

From $b=h+c$ we obtain
\[
0=bc=h c+c^2 \,\mbox{ and }\, 0=cb=ch+c^2.
\]
Hence, $h c-ch=0$. Because $h$ and $c$ commute, so do $h$ and $c^{1/2}$. Thus,
\[
0=bc= h c+c^2=c^{1/2}h c^{1/2} +c^2
\]
expresses $0$ as a sum of positive elements $c^{1/2}h c^{1/2}$ and $c^2$. Therefore
$c^2= h c=0$, and so $c=0$ also. Hence, $0={\mathcal E}^{-1}(c)=a_-=0$, which yields $a=a_+\in\A_+$.
Hence, ${\mathcal E}^{-1}$ is a positive map, which implies that
${\mathcal E}$ is an order isomorphism.
\end{proof}

\begin{corollary} Every surjective positive linear map that preserves $\tau$-fidelity is a Jordan
isomorphism.
\end{corollary}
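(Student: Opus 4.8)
The plan is to build directly on Lemma~\ref{order iso}, which already guarantees that a surjective positive linear map ${\mathcal E}$ preserving $\tau$-fidelity is an order isomorphism of $\A$, and then to invoke the structure theory of order isomorphisms between unital C$^*$-algebras. The guiding principle is a classical theorem of Kadison: a \emph{unital} order isomorphism between unital C$^*$-algebras is automatically a Jordan isomorphism. Hence the whole task reduces to proving that ${\mathcal E}$ is unital, i.e. that ${\mathcal E}(1)=1$; once this is established, Kadison's theorem yields the conclusion at once.

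First I would record that $p:={\mathcal E}(1)$ is a positive \emph{invertible} element of $\A$. Since $1$ is an order unit of $\A_{\rm sa}$ and both ${\mathcal E}$ and ${\mathcal E}^{-1}$ are positive, ${\mathcal E}$ carries order units to order units, and in a unital C$^*$-algebra the order units are precisely the positive invertible elements, so $p\in\lgGL(\A)_+$. Conjugating, one then has the elementary factorisation ${\mathcal E}(x)=p^{1/2}\psi(x)p^{1/2}$ for all $x\in\A$, where $\psi(x)=p^{-1/2}{\mathcal E}(x)p^{-1/2}$ is a \emph{unital} order isomorphism (a composite of two order isomorphisms) and therefore, again by Kadison, a unital Jordan isomorphism. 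Thus ${\mathcal E}$ is a Jordan isomorphism exactly when $p=1$.

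Next I would show that $p$ is central. Applying $\tau\circ{\mathcal E}=\tau$ to the factorisation gives $\tau\bigl(p\,\psi(x)\bigr)=\tau(x)$ for every $x\in\A$, that is, $\tau\circ\psi^{-1}=\tau(p\,\cdot\,)$ as linear functionals. The key point is that $\tau\circ\psi^{-1}$ is again a trace: a Jordan isomorphism splits, via a central projection, into the sum of a $*$-homomorphism and a $*$-anti-homomorphism, and $\tau$ composed with either of these is tracial. Consequently $\tau(p\,ab)=\tau(p\,ba)$ for all $a,b\in\A$; combining this with the cyclicity of $\tau$ yields $\tau\bigl((pa-ap)b\bigr)=0$ for every $b$, and taking $b=(pa-ap)^*$ together with faithfulness of $\tau$ forces $pa=ap$. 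Hence $p$ lies in the centre of $\A$, and the factorisation collapses to ${\mathcal E}(x)=p\,\psi(x)$.

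The final step, upgrading centrality to $p=1$, is the one I expect to be the main obstacle. Order-isomorphy by itself permits nontrivial central rescalings, so it cannot be enough: genuinely metric information from the fidelity identities of Proposition~\ref{trace properties} and Theorem~\ref{miza}, rather than the mere order structure, must be brought to bear precisely here. In the motivating situations where $\tau$ is (a scalar multiple of) the canonical trace on a factor such as $\M_n$ or $\B(\H)$, the centre is trivial, $p$ is a scalar, and $\tau(p\,\psi(x))=p\,\tau(\psi(x))=p\,\tau(x)=\tau(x)$ immediately gives $p=1$, so that ${\mathcal E}=\psi$ is the desired Jordan isomorphism. Isolating the extra input that pins the central scaling $p$ down to the identity in greater generality is, I anticipate, the delicate heart of the argument.
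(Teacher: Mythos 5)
You take the same route as the paper: Lemma~\ref{order iso} upgrades ${\mathcal E}$ to an order isomorphism, and Kadison's theorem is then invoked. The paper's proof is literally that one line (citing \cite[Theorem 2.1.3]{Stormer-book}), with no mention of unitality. You are right to balk at this: Kadison's theorem manufactures a Jordan isomorphism from a \emph{unital} order isomorphism, so the argument is not finished until one shows ${\mathcal E}(1)=1$. Your reduction to that point, the factorisation ${\mathcal E}(x)=p^{1/2}\psi(x)p^{1/2}$ with $\psi$ a unital Jordan isomorphism, and the proof that $p={\mathcal E}(1)$ is central (via the tracial character of $\tau\circ\psi^{-1}$ and the faithfulness of $\tau$) are all sound.

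The final step you could not supply is not merely delicate; it is unobtainable from the stated hypotheses, because the corollary fails in this generality. Take $\A=\mathbb{C}\oplus\mathbb{C}$ with the faithful trace $\tau(a,b)=\tfrac{1}{3}(2a+b)$ and ${\mathcal E}(a,b)=\bigl(\tfrac{1}{2}b,\,2a\bigr)$. This is a positive linear bijection with positive inverse and $\tau\circ{\mathcal E}=\tau$, and for $\tau$-states $\sigma=(s_1,s_2)$, $\rho=(r_1,r_2)$ one computes
\[
F_\tau\bigl({\mathcal E}(\sigma),{\mathcal E}(\rho)\bigr)=\tfrac{1}{3}\Bigl(2\sqrt{\tfrac{s_2r_2}{4}}+\sqrt{4s_1r_1}\Bigr)
=\tfrac{1}{3}\bigl(2\sqrt{s_1r_1}+\sqrt{s_2r_2}\bigr)=F_\tau(\sigma,\rho),
\]
so ${\mathcal E}$ preserves $\tau$-fidelity; yet ${\mathcal E}(1)=(\tfrac{1}{2},2)\neq 1$ and ${\mathcal E}$ is not a Jordan homomorphism, since ${\mathcal E}\bigl((1,0)^2\bigr)=(0,2)$ while ${\mathcal E}(1,0)^2=(0,4)$. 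Thus the ``extra input'' you were hunting for must be an additional hypothesis rather than a consequence of fidelity preservation: contractivity of ${\mathcal E}$ does it (as for the Schwarz maps of Theorem~\ref{main result 1}, where $\|{\mathcal E}(1)\|\leq 1$ and $\tau(1-{\mathcal E}(1))=0$ force ${\mathcal E}(1)=1$), and so does triviality of the centre (your central $p$ becomes $\lambda 1$, and $\tau(p)=\tau(1)$ gives $\lambda=1$). In short, your proof is incomplete exactly where you say it is, but the defect lies in the statement and in the paper's one-line proof --- which silently treats the order isomorphism as unital --- not in your strategy.
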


\begin{proof} By Kadison's Theorem \cite[Theorem 2.1.3]{Stormer-book}, every order isomorphism of a unital 
C$^*$-algebra is a Jordan isomorphism.
\end{proof}

Jordan isomorphisms are quite general, and somewhat more information can be elicited in cases where the C$^*$-algebra 
$\A$ resembles a matrix algebra in terms of certain algebraic properties.

\begin{definition} A unital C$^*$-algebra $\A$ is:
\begin{enumerate}
\item \emph{finite}, if 
$xy=1$ implies $yx=1$, and
\item \emph{quasi-transitive}, if $x\A y=\{0\}$, for some $x,y\in\A$, holds only if $x=0$ or $y=0$.
\end{enumerate}
\end{definition}

Theorem \ref{main result 1} is the first main result of the present paper.

 \begin{theorem}\label{main result 1} Assume that $\A$ is a finite quasi-transitive C$^*$-algebra.
If a surjective Schwarz map ${\mathcal E}:\A\rightarrow\A$ preserves $\tau$-fidelity, 
then ${\mathcal E}$ is an automorphism of $\A$.
\end{theorem}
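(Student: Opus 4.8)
The plan is to use the structure already in hand and treat the Schwarz hypothesis as the extra ingredient that rules out anti-multiplicativity. Since a Schwarz map is positive, and ${\mathcal E}$ is surjective and preserves $\tau$-fidelity, the corollary following Lemma~\ref{order iso} already shows that ${\mathcal E}$ is a Jordan isomorphism of $\A$. All that remains is to promote this Jordan isomorphism to a genuine $*$-automorphism. First I would record that ${\mathcal E}$ is unital: the identity ${\mathcal E}(h^2)={\mathcal E}(h)^2$ for self-adjoint $h$ polarises (using linearity and the fact that $\A_{\rm sa}$ spans $\A$) to ${\mathcal E}(ab+ba)={\mathcal E}(a){\mathcal E}(b)+{\mathcal E}(b){\mathcal E}(a)$ for all $a,b\in\A$. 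Putting $a=1$ gives $2{\mathcal E}(b)={\mathcal E}(1){\mathcal E}(b)+{\mathcal E}(b){\mathcal E}(1)$; since ${\mathcal E}(1)$ is a projection and ${\mathcal E}$ is surjective, this reads $2y=py+yp$ for $p={\mathcal E}(1)$ and all $y\in\A$, and evaluating at $y=1$ forces $p=1$. Thus ${\mathcal E}$ is a unital Jordan isomorphism.

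Next I would exploit quasi-transitivity, which is precisely the statement that $\A$ is a \emph{prime} C$^*$-algebra. Because ${\mathcal E}$ is a Jordan homomorphism \emph{onto} the prime algebra $\A$, the classical dichotomy for Jordan homomorphisms (Herstein's theorem; the underlying ring has characteristic $0$, so there is no $2$- or $3$-torsion obstruction) applies directly: ${\mathcal E}$ is either a $*$-homomorphism or a $*$-anti-homomorphism. Equivalently, in operator-algebraic language, the central projection governing the Jordan decomposition is trivial because a prime unital C$^*$-algebra has trivial centre.

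The decisive step, and the only place where the Schwarz inequality is used rather than mere positivity, is to eliminate the anti-homomorphism alternative. Suppose ${\mathcal E}$ were a $*$-anti-homomorphism. Then for every $x\in\A$ we would have ${\mathcal E}(x^*x)={\mathcal E}(x){\mathcal E}(x^*)={\mathcal E}(x){\mathcal E}(x)^*$, whereas the Schwarz inequality gives ${\mathcal E}(x)^*{\mathcal E}(x)\le{\mathcal E}(x^*x)$. Combining these, ${\mathcal E}(x)^*{\mathcal E}(x)\le{\mathcal E}(x){\mathcal E}(x)^*$ for all $x$, and by surjectivity this says $y^*y\le yy^*$ for every $y\in\A$. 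Replacing $y$ by $y^*$ yields the reverse inequality, so $y^*y=yy^*$ for all $y$; every element of $\A$ is normal, hence $\A$ is commutative, and in a commutative algebra a $*$-anti-homomorphism is already a $*$-homomorphism. In all cases ${\mathcal E}$ is therefore a unital $*$-homomorphism; being also injective by Lemma~\ref{order iso} and surjective by hypothesis, it is an automorphism of $\A$.

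I expect the main obstacle to be the second paragraph: invoking the homomorphism/anti-homomorphism dichotomy cleanly and checking that quasi-transitivity is exactly the primeness (equivalently, the triviality of the relevant central projection) required. The elimination of the anti-homomorphism case and the unitality argument are then short. It is worth remarking that the finiteness hypothesis carries no independent weight here: a faithful trace already rules out proper isometries, and via polar decomposition this forces $xy=1\Rightarrow yx=1$, so finiteness is automatic in the present setting.
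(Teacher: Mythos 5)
Your argument is essentially correct but follows a genuinely different route from the paper's. The paper first shows ${\mathcal E}$ is unital (a Schwarz map is contractive, so $1-{\mathcal E}(1)\in\A_+$, and $\tau\bigl(1-{\mathcal E}(1)\bigr)=0$ with $\tau$ faithful forces ${\mathcal E}(1)=1$), then that ${\mathcal E}$ is a surjective isometry, so that ${\mathcal E}(u)$ is an extreme point of the closed unit ball for every unitary $u$; Kadison's characterisation of extreme points together with quasi-transitivity and finiteness forces ${\mathcal E}(u)$ to be unitary, and the multiplicative-domain argument for Schwarz maps plus the Russo--Dye theorem then gives multiplicativity. You instead pass through the Jordan-isomorphism corollary, invoke Herstein's dichotomy for Jordan epimorphisms onto prime rings (quasi-transitivity is indeed primeness, and there is no characteristic obstruction here), and use the Schwarz inequality exactly once, to kill the anti-homomorphic alternative; that elimination step ($y^*y\le yy^*$ for all $y$ forces commutativity) is clean and correct. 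Your remark that finiteness is redundant in the presence of a faithful trace is also right in substance, though ``via polar decomposition'' is not the argument --- rather, $xy=1$ makes $yx$ an idempotent whose range projection has trace $\tau(1)$, whence $yx=1$ by faithfulness. Your route has the merit of isolating precisely where each hypothesis enters.

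One wrinkle needs repair: you extract the Jordan-isomorphism property from the corollary following Lemma~\ref{order iso} and only afterwards derive unitality from the Jordan identity. But Kadison's theorem, on which that corollary rests, is a statement about \emph{unital} order isomorphisms, so as written your argument is circular at that point. The fix is immediate and is exactly the paper's opening step: since ${\mathcal E}$ is a Schwarz map it is contractive, hence ${\mathcal E}(1)\le 1$, and $\tau\bigl(1-{\mathcal E}(1)\bigr)=0$ with $\tau$ faithful yields ${\mathcal E}(1)=1$. With unitality established first, Kadison's theorem applies legitimately and the remainder of your proof goes through.
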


\begin{proof} 
Because $\|{\mathcal E}\|\leq 1$, the element ${\mathcal E}(1)\in\A$ satisfies $\|{\mathcal E}(1)\|\leq 1$.
Thus, ${\mathcal E}(1)$ is a positive contraction and, therefore, $1-{\mathcal E}(1)$ is positive. Apply $\tau$ to obtain
$0\leq\tau(1-{\mathcal E}(1))=\tau(1)-\tau\circ{\mathcal E}(1)=\tau(1)-\tau(1)=0$. Because $\tau$ is faithful and $1-{\mathcal E}(1)\in\A_+$, we
have $\tau(1-{\mathcal E}(1))=0$ only if $1-{\mathcal E}(1)=0$. That is, ${\mathcal E}$ is unital.

Since, by Lemma \ref{order iso}, ${\mathcal E}^{-1}$ and ${\mathcal E}$ are positive linear maps, their norms are achieved at $1\in\A$
\cite{russo--dye1966}, \cite[Theorem 1.3.3]{Stormer-book}. Hence,
${\mathcal E}^{-1}(1)=1$ implies that $\|{\mathcal E}^{-1}\|=1$. Thus, for every $x\in \A$,
\[
\|x\| = \|{\mathcal E}^{-1}\circ{\mathcal E}(x)\| \leq \|{\mathcal E}^{-1}\|\,\|{\mathcal E}(x)\|=\|{\mathcal E}(x)\|\leq\|x\|.
\]
In other words, ${\mathcal E}$ is an isometry.  

Select a unitary $u\in \A$. Because ${\mathcal E}$ is an isometry, we have that ${\mathcal E}(u)$ is in the closed unit ball of $\A$. Suppose that
${\mathcal E}(u)=\frac{1}{2}x+\frac{1}{2}y$ for some $x,y\in \A$ with $\|x\|\leq1$ and $\|y\|\leq1$. Then $u=\frac{1}{2}{\mathcal E}^{-1}(x)+\frac{1}{2}{\mathcal E}^{-1}(y)$.
Because ${\mathcal E}^{-1}$ is an isometry and every unitary in $\A$ is an extreme point of the closed unit ball of $\A$, $u={\mathcal E}^{-1}(x)={\mathcal E}^{-1}(y)$,
which implies that $x=y={\mathcal E}(u)$. Hence, ${\mathcal E}(u)$ is an extreme point of the closed unit ball of $\A$.
By \cite[Theorem 1]{kadison1951}, the extreme point
${\mathcal E}(u)$ is necessarily a partial isometry such that, if $e={\mathcal E}(u)^*{\mathcal E}(u)$ and $f={\mathcal E}(u){\mathcal E}(u^*)$, then $(1-e)\A(1-f)=\{0\}$.
By the hypothesis that $\A$ is quasi-transitive, we obtain $e=1$ or $f=1$. Thus, either ${\mathcal E}(u)^*{\mathcal E}(u)=1$
or ${\mathcal E}(u^*){\mathcal E}(u)=1$. Because $\A$ is finite, either of these conditions imply that ${\mathcal E}(u)$ is unitary.
Hence, ${\mathcal E}(u)$ is unitary for each unitary $u\in\A$.
Now for every unitary $u\in \A$, we see that 
\[
1={\mathcal E}(u^*u)={\mathcal E}(u^*){\mathcal E}(u)={\mathcal E}(u){\mathcal E}(u^*)={\mathcal E}(uu^*).
\] 
Hence, every
unitary in $\A$ is in the multiplicative domain $\mathfrak M_{\mathcal E}$ of ${\mathcal E}$. Because ${\mathcal E}$ is a Schwarz map, its multiplicative domain is a
C$^*$-subalgebra of $\A$ \cite[Corollary 2.1.6]{Stormer-book}. Thus, $\mathfrak M_{\mathcal E}$ contains the norm-closed span of the unitary group of $\A$, 
which by the Russo--Dye Theorem \cite[Theorem 1]{russo--dye1966} implies that $\mathfrak M_{\mathcal E}=\A$. Hence, the bijective unital map
${\mathcal E}$ is a homomorphism.
\end{proof}

To address the cases in which $\A$ may not be finite or quasi-transitive, 
we shall need to require more positivity of ${\mathcal E}$ than simply the Schwarz map feature.

\begin{definition} A linear map ${\mathcal E}:\A\rightarrow\A$ is said
to be a transformation of \emph{order zero}  if ${\mathcal E}(x)\bot{\mathcal E}(y)$ for all $x,y\in\A$ for which $x\bot y$.
\end{definition}

Although Gardner \cite{gardner1979a} has studied order zero maps in a fairly general context, the following results
draw upon the structure theory for completely positive order zero maps developed by 
Winter and Zacharias \cite{winter--zacharias2009}.

\begin{theorem}\label{main result 2} If ${\mathcal E}:\A\rightarrow\A$ that preserves $\tau$-fidelity, then 
there is a homomorphism $\pi:\A\rightarrow\A$ and a positive element $h$ in the centre of $\A$ such that
$\mathcal E(x)=\pi(x)h$, for every $x\in\A$. In particular, if
$\A$ has trivial centre, then
${\mathcal E}$ is an injective unital
homomorphism.
\end{theorem}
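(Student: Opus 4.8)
The plan is to recognise the (completely positive) map ${\mathcal E}$ as one of order zero and then to feed it into the structure theorem of Winter and Zacharias \cite{winter--zacharias2009}. First I would show that ${\mathcal E}$ carries orthogonal positive elements to orthogonal positive elements. If $x,y\in\A_+$ are nonzero with $x\bot y$, then $\sigma=\tau(x)^{-1}x$ and $\rho=\tau(y)^{-1}y$ are $\tau$-states with $\sigma\bot\rho$, so $F_\tau(\sigma,\rho)=0$ by assertion (3) of Theorem \ref{miza}. Since $\tau\circ{\mathcal E}=\tau$, the images ${\mathcal E}(\sigma)$ and ${\mathcal E}(\rho)$ are again $\tau$-states, and fidelity preservation gives $F_\tau({\mathcal E}(\sigma),{\mathcal E}(\rho))=0$; a second application of Theorem \ref{miza}(3) yields ${\mathcal E}(\sigma)\bot{\mathcal E}(\rho)$, whence ${\mathcal E}(x)\bot{\mathcal E}(y)$ by homogeneity. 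Thus ${\mathcal E}$ is a completely positive transformation of order zero.

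Next I would apply the Winter--Zacharias structure theorem to ${\mathcal E}$. Writing $h={\mathcal E}(1)\in\A_+$ and $C=\cstar({\mathcal E}(\A))$, this produces a homomorphism $\pi:\A\rightarrow\A$ such that $h$ commutes with $\pi(\A)$ and ${\mathcal E}(x)=\pi(x)h=h\pi(x)$ for all $x\in\A$. In particular $h$ commutes with every element of ${\mathcal E}(\A)=h\pi(\A)$ and therefore lies in the centre of $C=\cstar({\mathcal E}(\A))$. A short computation using $\pi(\sigma^{1/2})=\pi(\sigma)^{1/2}$ and the commutation of $h$ with $\pi(\A)$ shows, conversely, that any such factorisation is automatically fidelity preserving once $\tau\circ{\mathcal E}=\tau$; so no information is sacrificed in passing to this reduction.

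The crux is to promote $h$ from the centre of $\cstar({\mathcal E}(\A))$ to the centre of $\A$ itself. Here I would invoke the injectivity of ${\mathcal E}$ from Lemma \ref{order iso}: the factorisation ${\mathcal E}(x)=h\pi(x)$ forces $\pi$ to be injective, while $h=h\pi(1)$ shows that $h$ is supported under the projection $\pi(1)$. The identity $\tau(h\pi(x))=\tau(x)$ for all $x\in\A$, which is merely a restatement of $\tau\circ{\mathcal E}=\tau$, pins down $h$ relative to $\pi$; combining it with the faithfulness of $\tau$ and the isometry/extreme-point machinery from the proof of Theorem \ref{main result 1} (unitaries of $\A$ are extreme points of the closed unit ball and are carried into the multiplicative domain), I would conclude that $\pi(1)=1$ and that $h$ commutes with all of $\A$, that is, $h\in Z(\A)$. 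I expect this promotion to be the main obstacle, since order zero together with trace preservation locates $h$ only in the centre of the C$^*$-algebra generated by the range of ${\mathcal E}$, and the passage to $Z(\A)$ genuinely requires the rigidity furnished by injectivity and by the faithfulness of the trace.

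Finally, for the ``in particular'' clause assume the centre of $\A$ is trivial. Then the central positive element $h$ is a scalar, $h=\lambda 1$ with $\lambda\ge 0$, so ${\mathcal E}=\lambda\pi$; since $\tau\circ{\mathcal E}=\tau$ precludes ${\mathcal E}=0$, we have $\lambda>0$ and hence $\pi(1)=1$ from ${\mathcal E}(1)=\lambda\pi(1)=\lambda 1$. Applying $\tau$ to ${\mathcal E}(1)=\lambda 1$ gives $\lambda\tau(1)=\tau(1)$, so $\lambda=1$. Therefore ${\mathcal E}=\pi$ is a unital homomorphism, and it is injective by Lemma \ref{order iso}.
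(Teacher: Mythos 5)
Your overall strategy is the paper's: use assertion (3) of Theorem \ref{miza} twice to show that $\mathcal E$ sends orthogonal positive elements to orthogonal positive elements, conclude that $\mathcal E$ is a completely positive map of order zero, and then invoke the Winter--Zacharias structure theorem \cite[Theorem 3.3]{winter--zacharias2009} to obtain the factorisation $\mathcal E(x)=\pi(x)h$ with $h=\mathcal E(1)$. (The paper additionally passes through \cite[Remark 2.4]{winter--zacharias2009}, together with the inequality $\mathcal E(x^*)\mathcal E(x)\le\|\mathcal E(1)\|^2\mathcal E(x^*x)$, to upgrade orthogonality preservation from positive elements to arbitrary elements so as to match the order-zero definition it states; since the Winter--Zacharias notion of order zero only requires the positive-element version, this difference is cosmetic.) The ``in particular'' clause is handled identically in both arguments.

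The genuine problem is the step you yourself single out as the crux: promoting $h$ from the commutant of the range to the centre of $\A$. The mechanism you propose --- the isometry and extreme-point arguments from the proof of Theorem \ref{main result 1}, in which unitaries are shown to map to extreme points of the closed unit ball and hence into the multiplicative domain --- is not available here: that argument needs $\mathcal E$ to be surjective (so that $\mathcal E^{-1}$ exists, is positive, and is therefore an isometry along with $\mathcal E$) and needs $\A$ to be finite and quasi-transitive (to conclude from Kadison's description of extreme points that $\mathcal E(u)$ is unitary). None of these hypotheses appear in Theorem \ref{main result 2}, so the sketched promotion does not go through as written, and your proof of the first assertion is incomplete at exactly this point. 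For what it is worth, the paper does not supply an independent argument either: it reads the centrality of $h$ in $\A$ directly off its citation of \cite[Theorem 3.3]{winter--zacharias2009}, whereas that theorem, as you correctly note, places $h$ only in the centre of (the multiplier algebra of) the C$^*$-subalgebra generated by the range of $\mathcal E$, with $\pi(\A)$ in the commutant of $h$. So your instinct that something must be added here is sound, but the particular fix you propose would fail; note also that, absent this promotion, the trivial-centre case cannot be deduced in the way both you and the paper deduce it, since one can no longer conclude that $h$ is a scalar.
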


\begin{proof}  
Select any nonzero $a,b\in\A_+$ and scale them by their traces so that $\sigma=\tau(a)^{-1}a$ and $\rho=\tau(b)^{-1}b$
are $\tau$-states. Further, assume that $a\bot b$; thus, $\sigma\bot\rho$. Hence, by Theorem \ref{miza},
$0=F_\tau(\sigma,\rho)=F_\tau\left({\mathcal E}(\sigma),{\mathcal E}(\rho)\right)$ yields ${\mathcal E}(\sigma)\bot {\mathcal E}(\rho)$ and, by linearity,
${\mathcal E}(a)\bot{\mathcal E}(b)$. By Stinespring's Theorem, 
${\mathcal E}(x^*){\mathcal E}(x)\leq\|{\mathcal E}(1)\|^2{\mathcal E}(x^*x)$
for every $x\in\A$. Therefore, using the proof of Remark 2.4 in \cite{winter--zacharias2009}, we see that the property
``${\mathcal E}(a)\bot {\mathcal E}(b)$ for all positive $a$ and $b$ with $a\bot b$'' implies the property
``${\mathcal E}(x)\bot {\mathcal E}(y)$
for all $x,y\in \A$ for which $x\bot y$.'' That is, ${\mathcal E}$ is a completely positive linear map of order zero.

Because ${\mathcal E}$ is a completely positive linear map of order zero, $h={\mathcal E}(1)$ is the centre of $\A$
and there is a homomorphism $\pi:\A\rightarrow\A$ such that ${\mathcal E}(x)=\pi(x)h$, for all $x\in\A$
\cite[Theorem 3.3]{winter--zacharias2009}. If the centre of $\A$ is trivial, then $h=\lambda 1$
for some $\lambda\in\mathbb R_+$. Further, $\tau(1)=\tau\circ{\mathcal E}(1)=\tau\left(\pi(1)h\right)=\lambda\tau(1)$
implies that $\lambda=1$, and so ${\mathcal E}=\pi$. The fact that $\pi$ is an injection follows from the fact that
${\mathcal E}$ is an injection (Lemma \ref{order iso}). 
\end{proof}

The results of this section apply to finite von Neumann algebras $\N$, 
since such algebras are unital C$^*$-algebras that possess
a faithful trace functional. Indeed, if $\N$ is a factor, then $\N$ 
has a unique (normal) faithful trace functional $\tau$ satisfying $\tau(1)=1$,
and all other tracial functionals on $\N$ are positive scalar multiples of $\tau$.
Therefore, if a positive
linear map on $\N$ preserves fidelity with respect to one trace, then it does so with respect to every trace.  

\begin{proposition}\label{main result 1 finite factor} Assume that $\N$ is a finite factor. 
If ${\mathcal E}:\N\rightarrow\N $ is a surjective Schwarz map that
preserves fidelity, then ${\mathcal E}$ is an automorphism of $\N$. 
If, moreover, $\N$ is a finite-dimensional factor, then
${\mathcal E}$ is a unitary channel.
\end{proposition}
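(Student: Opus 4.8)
The plan is to deduce the first assertion directly from Theorem \ref{main result 1} by checking that a finite factor is, as a unital C$^*$-algebra, both finite and quasi-transitive, and that it carries a faithful trace of the required kind. First I would fix the canonical trace: a finite factor $\N$ admits a unique faithful normal tracial state $\tau$ with $\tau(1)=1$, and every tracial functional on $\N$ is a scalar multiple of this $\tau$; hence the phrase ``preserves fidelity'' is unambiguous and coincides with ``preserves $\tau$-fidelity'' for this $\tau$. With $\tau$ in hand, and since $\mathcal E$ is assumed to be a surjective Schwarz map, the hypotheses of Theorem \ref{main result 1} are met as soon as the two structural conditions on $\N$ are verified, after which that theorem immediately yields that $\mathcal E$ is an automorphism.

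For the structural conditions I would argue as follows. For finiteness in the C$^*$-sense, suppose $xy=1$ in $\N$. Then $e=yx$ is an idempotent, since $e^2=y(xy)x=yx=e$, and tracial invariance gives $\tau(1-e)=\tau(1)-\tau(yx)=\tau(1)-\tau(xy)=0$. As $1-e$ is an idempotent, it is similar to its range projection $q$, so $\tau(q)=\tau(1-e)=0$; faithfulness of $\tau$ forces $q=0$, whence $1-e=0$ and $yx=1$. (Equivalently, one may simply invoke the standard fact that a finite von Neumann algebra is directly finite.) For quasi-transitivity I would show that a factor is prime: if $x\N y=\{0\}$ with $x,y\neq 0$, then passing to the right support $p$ of $x$ and the left support $q$ of $y$, the injectivity of left multiplication by $x$ on $p\N$ and of right multiplication by $y$ on $\N q$ reduces the hypothesis to $p\N q=\{0\}$ for the nonzero projections $p,q$. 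Multiplying out $p(\N q\N)=\{0\}$ and using normality, one finds that $p$ annihilates the weakly closed two-sided ideal generated by $q$, which in a factor is all of $\N$ because the central support of the nonzero projection $q$ equals $1$; hence $p=0$, a contradiction. Thus $\N$ is finite and quasi-transitive, and Theorem \ref{main result 1} applies to give that $\mathcal E$ is an automorphism of $\N$.

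For the final sentence, if $\N$ is moreover finite-dimensional then $\N\cong\mn(\mathbb C)$ for some $n$, and by the previous paragraph $\mathcal E$ is an automorphism of $\mn(\mathbb C)$. Since every automorphism of a full matrix algebra is inner, there is a unitary $u\in\mn(\mathbb C)$ with $\mathcal E(x)=uxu^*$ for all $x$. Because $\mn(\mathbb C)=\B(\mathbb C^n)=\T(\mathbb C^n)$ in finite dimensions, this map is precisely a unitary channel, which completes the argument.

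I expect the main obstacle to be the quasi-transitivity step: everything else is either a direct appeal to Theorem \ref{main result 1} or a short trace computation, whereas establishing primeness of the factor requires the support-projection reduction together with the central-support argument. The subtlety to handle with care there is the passage $p\N q=\{0\}\Rightarrow p=0$, where one must legitimately invoke weak closure and the fact that in a factor every nonzero projection has central support $1$.
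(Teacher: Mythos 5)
Your proposal is correct and follows the same route as the paper: reduce to Theorem \ref{main result 1} by noting that a finite factor is finite and quasi-transitive as a unital C$^*$-algebra (with a canonical faithful tracial state making ``preserves fidelity'' unambiguous), and then use that automorphisms of $\M_d(\mathbb C)$ are inner for the final claim. The only difference is that you supply explicit verifications of C$^*$-finiteness and of quasi-transitivity (primeness of a factor via central supports), facts the paper simply asserts; both arguments you give are standard and correct.
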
 

\begin{proof} Because $\N$ is both finite and quasi-transitive as a C$^*$-algebra,
Theorem \ref{main result 1} implies that ${\mathcal E}$ is a necessarily an automorphism of $\N$ (and, moreover, automatically normal). 
If $N\cong\M_d(\mathbb C)$ for some $d\in\mathbb N$, then every automorphism of $\N$ is
inner; hence, ${\mathcal E}$ is a unitary channel.
\end{proof}

%%%%%%%%%%%%%%%%%%%%%%
\section{Fidelity in a von Neumann Algebra Framework}

Even though Proposition \ref{main result 1 finite factor} applies to a finite von Neumann algebra $\N$, 
the result makes no reference to the predual $\N_*$,  and therefore 
Proposition \ref{main result 1 finite factor} should be viewed as a result in the Heisenberg picture.
The goal of the present section is to develop a framework for the Schr\"odinger picture, and to then specialise
to the setting of semifinite von Neumann algebras for the purpose of analysing fidelity.

A useful fact about $2\times 2$ matrices over von Neumann algebras is recorded below for later reference.

\begin{lemma}\label{matrix lemma} {\rm (\cite[p.~166]{Takesaki-bookI})}
If $\M$ is a von Neumann algebra acting on $\H$ and if $a,b\in\M_+$, 
then the following statements are equivalent for $x\in\M$:
\begin{enumerate}
\item $\left[\begin{array}{cc} a&x\\ x^*&b\end{array}\right]$ is a positive operator on $\H\oplus\H$;
\item $x=a^{1/2}y b^{1/2}$ for some $y\in\M$ with $\|y\|\leq 1$.
\end{enumerate}
\end{lemma}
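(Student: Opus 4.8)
The plan is to prove the two implications separately, handling (2)$\Rightarrow$(1) by an explicit factorization and the harder direction (1)$\Rightarrow$(2) by a regularization-and-limit argument carried out inside the von Neumann algebra $\M_2(\M)$ of $2\times 2$ matrices over $\M$, which acts on $\H\oplus\H$. For (2)$\Rightarrow$(1), suppose $x=a^{1/2}yb^{1/2}$ with $\|y\|\le 1$. I would write
\[
\begin{pmatrix} a & x\\ x^* & b\end{pmatrix}
= \begin{pmatrix} a^{1/2} & 0\\ 0 & b^{1/2}\end{pmatrix}
\begin{pmatrix} 1 & y\\ y^* & 1\end{pmatrix}
\begin{pmatrix} a^{1/2} & 0\\ 0 & b^{1/2}\end{pmatrix},
\]
and since the outer factors are adjoints of one another, positivity reduces to that of the middle matrix. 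The latter is positive precisely because $\|y\|\le 1$: evaluating its quadratic form at $(\xi,\eta)\in\H\oplus\H$ gives $\|\xi\|^2+\|\eta\|^2+2\,\mathrm{Re}\langle y\eta,\xi\rangle\ge(\|\xi\|-\|\eta\|)^2\ge 0$. This settles one implication.

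For (1)$\Rightarrow$(2) the difficulty is that $a$ and $b$ need not be invertible, so one cannot directly set $y=a^{-1/2}xb^{-1/2}$. I would first regularize: for $\varepsilon>0$ put $a_\varepsilon=a+\varepsilon 1$ and $b_\varepsilon=b+\varepsilon 1$, both invertible in $\M_+$, and observe that $\begin{pmatrix} a_\varepsilon & x\\ x^* & b_\varepsilon\end{pmatrix}$ is a positive invertible element of $\M_2(\M)$, being the sum of the given positive matrix and $\varepsilon$ times the identity. Using the invertibility of $a_\varepsilon$ and the triangular factorization
\[
\begin{pmatrix} a_\varepsilon & x\\ x^* & b_\varepsilon\end{pmatrix}
= \begin{pmatrix} 1 & 0\\ x^*a_\varepsilon^{-1} & 1\end{pmatrix}
\begin{pmatrix} a_\varepsilon & 0\\ 0 & b_\varepsilon-x^*a_\varepsilon^{-1}x\end{pmatrix}
\begin{pmatrix} 1 & a_\varepsilon^{-1}x\\ 0 & 1\end{pmatrix},
\]
whose outer factors are invertible and mutually adjoint, positivity forces the Schur complement to be positive, i.e. $x^*a_\varepsilon^{-1}x\le b_\varepsilon$. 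Setting $y_\varepsilon=a_\varepsilon^{-1/2}xb_\varepsilon^{-1/2}\in\M$, this inequality yields $y_\varepsilon^*y_\varepsilon=b_\varepsilon^{-1/2}x^*a_\varepsilon^{-1}xb_\varepsilon^{-1/2}\le 1$, so $\|y_\varepsilon\|\le 1$, while by construction $a_\varepsilon^{1/2}y_\varepsilon b_\varepsilon^{1/2}=x$.

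Finally I would pass to the limit $\varepsilon\to 0^+$. Since the unit ball of $\M$ is weak$^*$-compact, the net $(y_\varepsilon)$ admits a weak$^*$-cluster point $y$ with $\|y\|\le 1$; fix a subnet converging weak$^*$ to $y$. Because left and right multiplication by the fixed elements $a^{1/2}$ and $b^{1/2}$ are weak$^*$-continuous, $a^{1/2}y_\varepsilon b^{1/2}\to a^{1/2}yb^{1/2}$ weak$^*$ along this subnet. On the other hand, since $t\mapsto(t+\varepsilon)^{1/2}$ converges uniformly to $t^{1/2}$ on the compact spectra, $\|a_\varepsilon^{1/2}-a^{1/2}\|\to 0$ and $\|b_\varepsilon^{1/2}-b^{1/2}\|\to 0$; together with $\|y_\varepsilon\|\le 1$ this gives $\|a^{1/2}y_\varepsilon b^{1/2}-a_\varepsilon^{1/2}y_\varepsilon b_\varepsilon^{1/2}\|\to 0$, and as the second term is the constant $x$, it follows that $a^{1/2}y_\varepsilon b^{1/2}\to x$ in norm. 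Two limits of the same net must coincide, so $x=a^{1/2}yb^{1/2}$, as required. I expect this last step to be the main obstacle: it is here that the merely weak$^*$ convergence of $(y_\varepsilon)$ must be reconciled with replacing $a_\varepsilon^{1/2},b_\varepsilon^{1/2}$ by $a^{1/2},b^{1/2}$, and the separate norm-continuity of the functional calculus $t\mapsto(t+\varepsilon)^{1/2}$ is exactly what makes the two limits compatible.
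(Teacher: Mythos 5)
Your proof is correct and complete. Note that the paper does not prove this lemma at all: it is quoted verbatim from Takesaki's book (the citation to p.~166 of \emph{Theory of Operator Algebras I}) and used as a black box, so there is no in-paper argument to compare yours against. Your two directions are both sound: the congruence $\left[\begin{smallmatrix} a & x\\ x^* & b\end{smallmatrix}\right]=D\left[\begin{smallmatrix} 1 & y\\ y^* & 1\end{smallmatrix}\right]D$ with $D=a^{1/2}\oplus b^{1/2}$ handles (2)$\Rightarrow$(1), and for (1)$\Rightarrow$(2) the Schur-complement bound $x^*a_\varepsilon^{-1}x\le b_\varepsilon$ correctly yields $\|y_\varepsilon\|\le 1$. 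The limiting step is also handled properly: you correctly separate the norm convergence $a_\varepsilon^{1/2}\to a^{1/2}$, $b_\varepsilon^{1/2}\to b^{1/2}$ (uniform on the spectrum, with the explicit bound $|(t+\varepsilon)^{1/2}-t^{1/2}|\le\varepsilon^{1/2}$) from the merely weak$^*$ convergence of a subnet of $(y_\varepsilon)$, and you invoke the $\sigma$-weak continuity of $z\mapsto a^{1/2}zb^{1/2}$, which holds because left and right multiplications by fixed elements of a von Neumann algebra are normal. The alternative classical route---defining $y$ directly via the bounded sesquilinear form $(b^{1/2}\eta,a^{1/2}\xi)\mapsto\langle x\eta,\xi\rangle$ on the closures of the ranges and checking $y\in\M$ by commutation with $\M'$---avoids the regularization but requires the commutant argument; your version stays entirely inside $\M_2(\M)$ and is equally valid.
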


%%%%%%%%%%%%%%%%%
\subsection{Duality and Channels}
We begin by making sense of the commonly used phrase
``$\phi$ is a trace-preserving completely positive linear map on $\T(\H)$.'' On the one hand,
the phrase suggests that $\T(\H)$ is a treated as a matrix-order space; on the other hand, 
much of the literature interprets the phrase to mean that $\phi$ is a normal completely 
positive linear map on $\B(\H)$ that preserves the trace of operators in $\T(\H)$. 
These two interpretations are not entirely compatible.

We begin by reviewing the matrix order 
on the predual $\M_*$ of an arbitrary von Neumann algebra $\M$. Recall from \cite{choi--effros1977} that
if $\oss$ is a matrix order space, and if the positive cone of $\M_n(\oss)$ is denoted by $\M_n(\oss)_+$,
then a linear map $\phi:\oss\rightarrow\ost$ of matrix ordered spaces is $k$-positive if $\phi\left(\M_n(\oss)_+\right)\subseteq
\M_n(\ost)_+ $ for every $n=1,\dots, k$, and $\phi$ is completely positive if $\phi$ is $k$-positive for all $k\in\mathbb N$.

The matrix order on $\M$ is inherited from $\B(\H)$: 
a matrix $X=[x_{ij}]_{i,j}\in\M_n(\M)$ is positive if $X$ is a positive operator
on the $n$-fold direct sum $\displaystyle\bigoplus_1^n\H$.
A matrix $\Omega=[\omega_{ij}]_{i,j}\in \M_n(\M_*)$ is positive if
\begin{equation}\label{eq:matrix dual}
\sum_{i=1}^n\sum_{j=1}^n\omega_{ij}(x_{ij})\geq 0,
\end{equation}
for every $X=[x_{ij}]_{i,j}\in\M_n(\M)$. Let $\M_n(\M_*)_+$ denote the positive matrices over $\M_*$.
(Recall that $\M_*$ consists of normal linear functionals on $\M$.)
A useful criterion for membership in $\M_n(\M_*)_+$ is as follows:
$\Omega=[\omega_{ij}]_{i,j}\in \M_n(\M_*)_+$ if and only if the linear map $\mathfrak L_\Omega:\M\rightarrow\M_n$,
defined by
\begin{equation}\label{lemma 4.7}
\mathfrak L_\Omega(x)=\left[ \omega_{ij}(x)\right]_{i,j},
\end{equation}
for $x\in\M$, is a completely positive map \cite[Lemma 4.7]{choi--effros1977}.

In the case where $\M=\M_d(\mathbb C)$, then every $\omega\in \M_*$ is determined uniquely by some matrix 
$y_\omega\in\M$ via the formula $\omega(x)=\tr(xy_\omega)$, 
for $x\in\M $. It can happen that a matrix 
$\Omega=[\omega_{ij}]_{i,j=1}^n $ of normal linear functionals is positive, yet the corresponding matrix of operators 
$Y_\Omega=[y_{\omega_{ij}}]_{i,j=1}^n$ fails to be positive as an operator on $\displaystyle\bigoplus_1^n\mathbb C^d$.
For example, let $d=2$ and $\M=\M_2(\mathbb C)$, and consider the matrix  
$\Omega=\left[\begin{array}{cc} \omega_{11}& \omega_{12} \\ \omega_{21}& \omega_{22} \end{array}\right]\in \M_2(\M_*)$, where
the corresponding operators $y_{\omega_{ij}}\in\M$ inducing each $\omega_{ij}\in\M_*$ are
\[
y_{\omega_{11}}=\left[\begin{array}{cc} 1& 0 \\ 0& 0 \end{array}\right], \;
y_{\omega_{12}}=\left[\begin{array}{cc} 0& 0 \\ 1 & 0 \end{array}\right], \;
y_{\omega_{21}}=\left[\begin{array}{cc} 0& 1\\ 0 & 0 \end{array}\right], \;
y_{\omega_{22}}=\left[\begin{array}{cc} 0& 0 \\ 0 & 1 \end{array}\right].
\]
The matrix $Y_\Omega\in\M_2(\M)$ is not positive, as 
$Y_\Omega=[1]\oplus\left[\begin{array}{cc} 0& 1 \\ 1 & 0 \end{array}\right]\oplus[1]$
has one negative eigenvalue. However, the linear map 
$\mathfrak L_\Omega:\M \rightarrow\M_2(\mathbb C)$ satisfies $\mathfrak L_\Omega(x)=x$,
for every $x\in\M$, and so $\mathfrak L_\Omega$ is a completely positive map; by criterion (\ref{lemma 4.7}), we
deduce that $\Omega\in\M_2(\M_*)_+$. Thus, in identifying matrices 
$[\omega_{ij}]_{i,j}$ over $\M_*$ with the matrices
$[y_{\omega_{ij}}]_{i,j}$ over $\M$, we deduce from this example  that
$\M_2(\M_*)_+\cap \M_2(\M)\not\subseteq \M_2(\M)_+$. 
By similar reasoning, the matrix $\Delta=\left[\begin{array}{cc} \omega_{11}& \omega_{21} \\ \omega_{12}& \omega_{22} \end{array}\right]\in \M_2(\M_*)$
is induced by the positive operator matrix $Y_\Delta\in\M_2(\M)_+$, defined by
\[
Y_\Delta=\left[\begin{array}{cc} y_{\omega_{11}}& y_{\omega_{21}} \\ y_{\omega_{12}}& y_{\omega_{22}} \end{array}\right]
=\left[\begin{array}{cccc} 1&0&0&1 \\ 0&0&0&0 \\ 0&0&0&0 \\ 1&0&0&1 \end{array}\right],
\]
but $ \Delta$ is not positive in  $\M_2(\M_*)$ because $\mathfrak L_\Delta:\M \rightarrow\M_2(\mathbb C)$ satisfies $\mathfrak L_\Delta(x)=x^t$,
for every $x\in \M $, and it is well known that the transpose map fails to be completely positive.

Notwithstanding the discussion of the previous paragraph, it is nevertheless true
that a complete positivity linear map $\mathcal E$ on the matrix-ordered space
$\M_d(\mathbb C)_*=\T_d(\mathbb C)$ (the $d\times d$ complex matrices in the trace norm) 
is also completely positive on the C$^*$-algebra $\M_d(\mathbb C)$. This is a remarkably fortunate
circumstance, as the matrix orders
on $\T_d(\mathbb C)$ and $\M_d(\mathbb C)$ are distinct (as indicated in the example of the previous paragraph), and 
the literature (including the literature on fidelity) makes extensive use of this 
fortunate fact by frequently make little or no reference to
the matrix order on $\T_d(\mathbb C)$ and, instead, drawing entirely upon
the matrix order on $\M_d(\mathbb C)$. 
Although this fortunate circumstance may be relevant for matrix algebras, one does not expect the
same situation to persist with arbitrary von Neumann algebras. Therefore, in what follows, it will be important for us
to distinguish
between the matricial order on the predual $\M_*$ and the matricial order
on $\M$ in discussing completely positive linear maps
on $\M_*$.

With this preface, the following definition is natural.

\begin{definition}
If $\M$ is a von Neumann algebra, then a
\emph{channel}, or \emph{quantum channel}, 
is a continuous linear operator $\mathcal E:\M_*\rightarrow\M_*$
such that the dual map $\mathcal E^*:\M\rightarrow\M$ is
unital, normal, and completely positive.
\end{definition}

Not every completely positive linear map on a von Neumann algebra admits a Kraus decompostion, but those that do
are called inner maps \cite{claire--havet1990}.

\begin{definition} Let $\M$ be a von Neumann algebra with predual $\M_*$.
\begin{enumerate}
\item A completely positive linear map $\phi:\M\rightarrow\M$ is \emph{inner} if there exists a sequence 
$\{a_k\}_{k\in\mathbb N}$ in $\M$ such that $\phi(x)=\displaystyle\sum_{k=1}^\infty a_k^*xa_k$, for every $x\in\M$, 
where the convergence of the sum is with respect to the ultraweak topology of $\M$; that is, for each $x\in \M$,
\[
\omega\left(\phi(x)\right)=\lim_{m\rightarrow\infty}\sum_{k=1}^m\omega( a_k^*xa_k),
\]
for every $\omega\in\M_*$.
\item A channel $\mathcal E:\M_*\rightarrow\M_*$ is \emph{inner} if the unital completely positive linear map
$\mathcal E^*$ is inner.
\end{enumerate}
\end{definition}

%%%%%%%%%%%%
\subsection{Fidelity}

Semifinite von Neumann algebras (see \cite{Takesaki-bookI,Takesaki-bookII} )
admit a very natural analogue of the classical notion of density operator.
Therefore, assume for the remainder of this paper
that $\M$ is a semifinite  von Neumann algebra, and that 
$\tau$ is a fixed faithful normal semifinite tracial weight on $\M$. 
Because we have already encountered finite von Neumann
algebras in the previous section, we shall also assume that $\M$ is not finite. 
Thus, the trace of the identity $1\in\M$ is
infinite.

By definition, $\tau$ is a function
$\M_+\rightarrow[0,+\infty]$ such that, for all $a,b\in\M_+$, $x\in \M$, and $\lambda\geq0$ in $\mathbb R$, we have
$\tau(x^*x)=\tau(xx^*)$,
$\tau(a+b)=\tau(a)+\tau(b)$, $\tau(\lambda a)=\lambda\tau(a)$, $\tau(a)>0$ 
if $a\not=0$, $\tau\left(\sup_\alpha a_\alpha\right)=\sup\tau(a_\alpha)$
for every bounded  increasing net $\{a_\alpha\}_\alpha$ in $\M_+$, and for each nonzero $h\in\M_+$ there exists nonzero $h_0\in\M_+$
with $h_0\leq h$ and $\tau(h_0)<\infty$.
As shown in \cite{fack1982}, if $z\in\M$, then 
\begin{equation}\label{E:trace}
\tr(|z|)=\int_0^{\infty}\mu_z(t)\,dt,
\end{equation}
where
for, each $t\in[0,\infty)$,
\[
\mu_z(t)=\mbox{inf}\,\left\{\|ze\|\,|\,e\in\mathcal P(\M),\,\tr(1-e)\le t
\right\},
\]
and where $\mathcal P(\M)$ is the projection lattice for $\M$. 
Moreover,
$\mu_z=\mu_{z^*}=\mu_{|z|}$ and, consequently, for any $w,z\in \M$,
\begin{equation}\label{E:symmetry identity}
\mu_{|wz^*|}=\mu_{|zw^*|}.
\end{equation}
Furthermore, if $h\in \M_+$ and if 
$\psi:[0,\infty)\rightarrow[0,\infty)$ is an increasing continuous function
such that $\psi(0)=0$, then 
\begin{equation}\label{E:functional calculus}
\mu_{\psi(h)}(t)=\psi\left(\mu_h(t)\right),\mbox{ for all }t\in[0,\infty) .
\end{equation}
Using $\psi(t)=\sqrt{t}$, 
equations \eqref{E:trace}, \eqref{E:symmetry identity}, \eqref{E:functional calculus}
imply that 
\begin{equation}\label{E:symmetry}
\tau(|h^{1/2}k^{1/2}|)=\tau(|k^{1/2}h^{1/2}|) 
\end{equation}
for all $h,k\in \M_+$. %Hence, in particular, with $h=\pi_\tau(\sigma)$ and $k=\pi_\tau(\rho)$, we obtain
%$F_\tau(\sigma,\rho)=F_\tau(\rho,\sigma)$, which proves (1).

Define: 
\[
\mathfrak n_\tau = \{x\in\M\,|\,\tau(x^*x)<\infty\}
\,\mbox{ and }  \,
\mathfrak m_\tau= (\mathfrak n_\tau)^2 = \left\{\sum_{j=1}^k  x_jy_j\,|\, k\in\mathbb N,\,x_j,y_j\in\mathfrak n_\tau \right\}.
\]
The sets $\mathfrak m_\tau$ and $\mathfrak m_\tau$ are (algebraic) ideals of $\M$. If $\mathfrak p_\tau\subseteq\M_+$ is
the set of all $a\in\M_+$ such that $\tau(a)<\infty$, then $\mathfrak p_\tau=\mathfrak m_\tau\cap \M_+$ and the
function $\tau_{\vert \mathfrak p_\tau}$ extends to a linear map, which we denote again by $\tau$,  on $\mathfrak m_\tau$
such that $\tau(x^*)=\overline{\tau(x)}$, $\tau(xy)=\tau(yx)$ for all $x\in\M$ and $y\in\mathfrak m_\tau$, and $\tau(xy)=\tau(yx)$
for all $x,y\in\mathfrak n_\tau$. The ideal $\mathfrak m_\tau$ is
called the \emph{trace ideal} of $\tau$. If $\M=\B(\H)$ and if $\tau$ is the canonical trace on $\B(\H)$, then 
$\mathfrak m_\tau=\T(\H)$. Hence, the elements of the ideal $\mathfrak m_\tau$ are analogues of
trace-class operators.

Recall from \cite{nelson1974} and \cite[Chapter IX]{Takesaki-bookII} 
that $\H$ and $\M$ determine a topological vector space $\mathfrak M(\H)$
and a topological involutive algebra $\mathfrak M(\M)$ such that $\H$ and $\M$ are 
dense (in appropriate topologies) in $\mathfrak M(\H)$ and $\mathfrak M(\M)$ respectively, 
and that $\mathfrak M(\M)$ acts on $\mathfrak M(\H)$ in a natural fashion.
In this context, for each $z\in\mathfrak M(\M)$ and $\varepsilon>0$ there exists a projection $p\in\M$ such that 
$zp\in\M$ and $\tau(1-p)<\varepsilon$. 
The set $\mathfrak M(\M)_+$ of all $z^*z$, for $z\in\mathfrak M(\M)$, is a pointed convex cone
and for each $a\in\mathfrak M(\M)_+$ there is a unique $b\in\mathfrak M(\M)_+$ (denoted by $a^{1/2}$) for which $b^2=a$. 
Thus, the element $|z|$ given by $(z^*z)^{1/2}$ lies in $\mathfrak M(\M)$ for each $z\in\mathfrak M(\M)$.
Furthermore, the function $\tau$
extends to $\mathfrak M(\M)_+$ via $\tau(a)=\displaystyle\lim_{\varepsilon\rightarrow 0^+}\tau\left( a(1+\varepsilon a)^{-1}\right)$, for
$a\in\mathfrak M(\M)_+$, and satisfies the usual trace properties: $\tau(z^*z)=\tau(zz^*)$,
$\tau(a+b)=\tau(a)+\tau(b)$, and $\tau(\lambda a)=\lambda\tau(a)$ for all $z\in \mathfrak M(\M)$, $a,b\in \mathfrak M(\M)_+$, and $\lambda\in\mathbb R$
with $\lambda\ge0$. 

The predual $\M_*$  is linearly order isomorphic to
set $\left\{z\in\mathfrak M(\M)\,|\,\tau(|z|)<\infty\right\}$, and so we identify these two sets. 
The map $z\mapsto\tau(|z|)$ defines a norm $\|\cdot\|_1$ on $\M_*$ and with respect
to this norm $\M_*$ is a Banach space containing $\mathfrak m_\tau$ as a norm-dense linear submanifold.
In the case of a type I von Neumann algebra $\M$, 
nothing new is obtained, since $\mathfrak M(\M)=\M$ in this case. In particular, if $\M=\B(\H)$, 
then $\mathfrak M(\M)=\M$ and $\M_*=\mathfrak m_\tau=\T(\H)$.

For each $n\in\mathbb N$, let $\tau^{(n)}:\M_n(\mathfrak m_\tau)\rightarrow\M_n(\mathbb C)$ denote the
linear map
\[
\tau^{(n)}(Y)=\left[\tau(y_{ij}) \right]_{i,j},
\]
for $Y=[y_{ij}]_{i,j=1}^n\in\M_n(\mathfrak m_\tau)$.
 
\begin{lemma}\label{basic fact 3} If $Y\in\M_n(\mathfrak m_\tau)$ is a positive operator
on $\displaystyle\bigoplus_1^n \H$, then $\tau^{(n)}(Y)$ is a positive operator on $\mathbb C^n$.
\end{lemma}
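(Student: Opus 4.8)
The plan is to verify positive semidefiniteness of the scalar matrix $\tau^{(n)}(Y)=[\tau(y_{ij})]_{i,j}$ directly from the definition, by checking that $\langle\xi,\tau^{(n)}(Y)\xi\rangle\ge 0$ for every column vector $\xi=(\xi_1,\dots,\xi_n)\in\mathbb{C}^n$. Expanding this inner product and using that $\tau$ is linear on the trace ideal $\mathfrak m_\tau$ (to which each entry $y_{ij}$ belongs), the quantity $\langle\xi,\tau^{(n)}(Y)\xi\rangle=\sum_{i,j}\overline{\xi_i}\,\tau(y_{ij})\,\xi_j$ equals $\tau(z)$, where $z=\sum_{i,j}\overline{\xi_i}\,y_{ij}\,\xi_j$. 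Since $\mathfrak m_\tau$ is a linear ideal of $\M$, the element $z$ again lies in $\mathfrak m_\tau$, so $\tau(z)$ is well defined and this rewriting is legitimate.

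The key step is to show that $z$ is a positive element of $\M$, and here I would exploit the hypothesis that $Y$ is a positive operator on $\bigoplus_1^n\H$. Positivity of $Y$ says precisely that $\sum_{i,j}\langle h_i,y_{ij}h_j\rangle\ge 0$ for every $(h_1,\dots,h_n)\in\bigoplus_1^n\H$. I would test this inequality on the special ``scalar-multiple'' vectors $h_i=\xi_i h$, for an arbitrary fixed $h\in\H$; pulling the scalars out of the inner product yields $\langle h,z\,h\rangle=\sum_{i,j}\overline{\xi_i}\,\xi_j\langle h,y_{ij}h\rangle\ge 0$. As $h\in\H$ is arbitrary, this shows $z\in\M_+$, and therefore $z\in\mathfrak p_\tau=\mathfrak m_\tau\cap\M_+$.

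Finally, because $\tau$ is positive on $\mathfrak p_\tau$ (indeed $\tau(a)\ge 0$ for every $a\in\mathfrak p_\tau$ by the defining properties of a tracial weight), we get $\tau(z)\ge 0$; unwinding the identification $\tau(z)=\langle\xi,\tau^{(n)}(Y)\xi\rangle$ then gives $\langle\xi,\tau^{(n)}(Y)\xi\rangle\ge 0$, as required.

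I do not anticipate a serious obstacle: the conceptual content is simply the observation that compressing the positive operator $Y$ by vectors of the form $(\xi_1 h,\dots,\xi_n h)$ produces the positive element $z\in\M$ whose trace is the desired quadratic form. The only points demanding care are bookkeeping ones, namely confirming that $z$ genuinely lands in the trace ideal so that the linearity and positivity of $\tau$ established earlier in this section apply to it, and keeping the complex conjugations in their correct positions when the compression is carried out by these scalar-multiple vectors rather than by a genuine direct-sum vector.
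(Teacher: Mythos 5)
Your proof is correct, but it takes a genuinely different route from the paper's. The paper argues abstractly: since $\tau$ is a normal weight, it is a sum of normal positive functionals, each of which is completely positive as a map $\M\rightarrow\mathbb C$, so $\tau^{(n)}(Y)$ is a sum of positive matrices in $\M_n(\mathbb C)$, convergent because the entries of $Y$ lie in $\mathfrak m_\tau$. You instead verify positive semidefiniteness directly: writing $\langle\xi,\tau^{(n)}(Y)\xi\rangle=\tau(z)$ with $z=\sum_{i,j}\overline{\xi_i}\,\xi_j\,y_{ij}$, you observe that $z$ is the compression of $Y$ by the vectors $(\xi_1h,\dots,\xi_nh)$, hence $z\in\M_+$, and $z\in\mathfrak m_\tau$ since $\mathfrak m_\tau$ is a linear subspace, so $\tau(z)\geq0$. (One micro-point worth making explicit: $\langle h,zh\rangle\geq0$ for all $h\in\H$ does imply $z\geq0$, since on a complex Hilbert space positivity of the quadratic form already forces self-adjointness.) Your argument is more elementary and self-contained --- it sidesteps both the decomposition of normal weights into states and the convergence bookkeeping that the paper's one-line argument leaves implicit --- while the paper's approach has the advantage of exhibiting the lemma as an instance of the general principle that positive functionals are completely positive. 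Either proof is acceptable here.
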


\begin{proof} Recall that $\mathfrak p_\tau=\mathfrak m_\tau\cap\M_+$ \cite[Lemma V.2.16]{Takesaki-bookI} and that
if $\omega$ is a normal state on $\M$, then $\omega$ is a
completely positive linear map of $\M$. Because the trace $\tau:\M_+\rightarrow[0,\infty]$ is normal, it is a sum of a family of 
normal states \cite[p.~332]{Takesaki-bookI}; hence, $\tau^{(n)}(X)$ is a (possibly nonconvergent) 
sum of positive elements of $\M_n(\mathbb C)_+$. Therefore, 
because $Y\in\M_n(\mathfrak m_\tau)\cap \M_n(\M)_+$, the complex matrix $\tau^{(n)}(Y)$ is a 
(convergent) sum
of positive elements and is, hence, positive.
\end{proof}

\begin{lemma}\label{in ideal} If $a,b\in\mathfrak m_\tau\cap \M_+$, then $|a^{1/2}b^{1/2}|\in\mathfrak m_\tau$.
\end{lemma}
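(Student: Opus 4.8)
The plan is to show membership in the trace ideal directly, bypassing any norm estimate, by first placing the product $a^{1/2}b^{1/2}$ in $\mathfrak m_\tau$ and then transferring to its modulus via the polar decomposition together with the ideal property of $\mathfrak m_\tau$.

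First I would record that $a,b\in\mathfrak m_\tau\cap\M_+=\mathfrak p_\tau$, so that $\tau(a)<\infty$ and $\tau(b)<\infty$. Since $\M$ is a von Neumann algebra and $a,b\in\M_+$, the square roots $a^{1/2}$ and $b^{1/2}$ again lie in $\M$, and $\tau\bigl((a^{1/2})^*a^{1/2}\bigr)=\tau(a)<\infty$ shows $a^{1/2}\in\mathfrak n_\tau$; likewise $b^{1/2}\in\mathfrak n_\tau$. Invoking the identity $\mathfrak m_\tau=(\mathfrak n_\tau)^2$ then gives $a^{1/2}b^{1/2}\in\mathfrak n_\tau\cdot\mathfrak n_\tau\subseteq\mathfrak m_\tau$ at once.

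Next I would use the polar decomposition $a^{1/2}b^{1/2}=v\,|a^{1/2}b^{1/2}|$, where $v\in\M$ is a partial isometry, so that $|a^{1/2}b^{1/2}|=v^*\bigl(a^{1/2}b^{1/2}\bigr)$. Because $\mathfrak m_\tau$ is a two-sided ideal of $\M$ and $v^*\in\M$, left multiplication by $v^*$ keeps $a^{1/2}b^{1/2}\in\mathfrak m_\tau$ inside $\mathfrak m_\tau$; hence $|a^{1/2}b^{1/2}|\in\mathfrak m_\tau$, which is the assertion.

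There is essentially no hard step here, but there is one point to get right. One might be tempted to argue instead that $\tau(|a^{1/2}b^{1/2}|)<\infty$ by a tracial arithmetic--geometric mean bound such as $\tau(|a^{1/2}b^{1/2}|)\le\tfrac12\bigl(\tau(a)+\tau(b)\bigr)$; this only shows that $|a^{1/2}b^{1/2}|$ lies in the norm completion $\M_*=\{z\in\mathfrak M(\M):\tau(|z|)<\infty\}$, and does not by itself place it in the \emph{algebraic} trace ideal $\mathfrak m_\tau$, of which $\mathfrak m_\tau$ is only a dense submanifold. The clean route is therefore the ideal/polar-decomposition argument above, whose only genuine ingredient is that $\mathfrak m_\tau$ is a two-sided ideal (established earlier), so that it is closed under left multiplication by the partial isometry $v^*\in\M$.
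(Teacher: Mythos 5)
Your argument is correct, but it is genuinely different from the paper's. You stay entirely within the algebraic structure of the trace ideals: $a^{1/2},b^{1/2}\in\mathfrak n_\tau$ because $\tau\bigl((a^{1/2})^*a^{1/2}\bigr)=\tau(a)<\infty$, hence $a^{1/2}b^{1/2}\in(\mathfrak n_\tau)^2=\mathfrak m_\tau$, and then the polar decomposition $|a^{1/2}b^{1/2}|=v^*\bigl(a^{1/2}b^{1/2}\bigr)$ together with the two-sided ideal property of $\mathfrak m_\tau$ finishes the proof. The paper instead proves the quantitative estimate $\tau(|a^{1/2}b^{1/2}|)\le\sqrt{\tau(a)\tau(b)}$ via Fack's generalized singular-value inequalities ($\mu_{a^{1/2}b^{1/2}}\le\mu_{a^{1/2}}\mu_{b^{1/2}}$, the functional-calculus identity $\mu_{a^{1/2}}=\mu_a^{1/2}$, and Cauchy--Schwarz) and concludes membership from finiteness of the trace. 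Your route is more elementary and avoids \cite{fack1982} entirely; the paper's route costs more machinery but yields an explicit Cauchy--Schwarz-type bound on the fidelity as a by-product.

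One remark in your last paragraph should be corrected: you claim that an inequality of the form $\tau(|a^{1/2}b^{1/2}|)<\infty$ would only place $|a^{1/2}b^{1/2}|$ in $\M_*$ and not in $\mathfrak m_\tau$. That concern is relevant only for unbounded elements of $\mathfrak M(\M)$. Here $|a^{1/2}b^{1/2}|$ is a \emph{bounded positive} operator in $\M_+$, and by definition $\mathfrak p_\tau=\mathfrak m_\tau\cap\M_+$ consists exactly of the positive elements of $\M$ with finite trace; so finiteness of $\tau(|a^{1/2}b^{1/2}|)$ does suffice, and this is in fact precisely how the paper concludes. The criticism therefore does not apply to the paper's argument, though it does not affect the validity of your own.
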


\begin{proof} Recall from equation (\ref{E:trace}) that
$\displaystyle\tau(|z|)=\int_0^{\infty}\mu_z(t)\,dt$,
for every $z\in M$.Since $a$ and $b$
are positive, we have, for every $s>0$ in $\mathbb R$, that
\[
\begin{array}{rcl}
\displaystyle\int_0^s \mu_{a^{1/2}b^{1/2}}(t)\,dt &\leq& \displaystyle\int_0^s \mu_{a^{1/2}}(t)\mu_{b^{1/2}}(t)\,dt \\ && \\
&=&\displaystyle\int_0^s \mu_{a}(t)^{1/2}\mu_{b} (t)^{1/2}\,dt \\ && \\
&\leq& \left(\displaystyle\int_0^s \mu_{a}(t)\,dt\right)^{1/2}  \left(\displaystyle\int_0^s \mu_{b}(t)\,dt \right)^{1/2} \\ && \\
&\leq& \sqrt{\tau(a)\tau(b)},
\end{array}
\]
where the first of the inequalities above is a consequence of \cite[Corollaire 4.4]{fack1982}.
Thus, $\tau(|a^{1/2}b^{1/2}|)\leq \sqrt{\tau(a)\tau(b)}$, which implies that $|a^{1/2}b^{1/2}|\in\mathfrak m_\tau$.
\end{proof}

\begin{definition} A \emph{density operator in $\M$} 
is a positive operator $\rho\in\M$ such that $\tau(\rho)=1$.
\end{definition} 

Let $\mathfrak s_\tau$ denote the set of all density operators in $\M$.
If $\mathcal E:\M_*\rightarrow\M_*$ is a channel, then $\mathcal E$ is
trace preserving and so $\mathcal E$ maps $\mathfrak s_\tau$ back into itself. Because $\mathfrak s_\tau$ spans
$\mathfrak m_\tau$, we deduce that
\[
\mathcal E(\mathfrak m_\tau)\subseteq\mathfrak m_\tau,
\]
for every channel $\mathcal E:\M_*\rightarrow\M_*$.

\begin{lemma}\label{inner cp} If  a channel $\mathcal E:\M_*\rightarrow\M_*$ is inner, then 
\begin{enumerate} 
\item there exists a sequence 
$\{a_k\}_{k\in\mathbb N}$ in $\M$ such that $\displaystyle\sum_{k=1}^\infty a_k^*   a_k=1$, and
$\mathcal E(\rho)=\displaystyle\sum_{k=1}^\infty a_k \rho a_k^*$, for every density operator $\rho\in\M$, 
\item $\mathcal E^{(2)}\left(\M_2(\mathfrak m_\tau)\cap \M_2(\M)_+\right)\subseteq \M_2(\M)_+$, and
\item $\mathcal E(y)^*\mathcal E(y)\leq\mathcal E(y^*y)$, for every $y\in\mathfrak m_\tau$.
\end{enumerate}
\end{lemma}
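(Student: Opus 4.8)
The plan is to obtain all three parts from the given Kraus representation of $\mathcal E^*$ by dualising through the trace pairing, and then to transport positivity statements from $\M$ to $\M_2(\M)$ by amplification.

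For (1): since $\mathcal E$ is inner, $\mathcal E^*(x)=\sum_k a_k^* x a_k$ (as an ultraweak limit), and unitality of $\mathcal E^*$ gives $\sum_k a_k^* a_k=1$. Using the duality $\tau(\mathcal E(\rho)x)=\tau(\rho\,\mathcal E^*(x))$ together with cyclicity of $\tau$, I would compute, for a density operator $\rho$ and $x\in\M_+$,
\[
\tau(\rho\,\mathcal E^*(x))=\sum_k\tau(\rho a_k^* x a_k)=\sum_k\tau(a_k\rho a_k^* x),
\]
the rearrangement being justified because $\omega_\rho=\tau(\rho\,\cdot\,)$ is a normal state and the partial sums of $\sum_k a_k^* x a_k$ increase ultraweakly to $\mathcal E^*(x)$. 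Taking $x=1$ gives $\sum_k\tau(a_k\rho a_k^*)=\tau(\rho)=1$, so each $a_k\rho a_k^*\in\mathfrak m_\tau$. Writing $S_m=\sum_{k\le m}a_k\rho a_k^*$, the displayed identity yields $\tau((\mathcal E(\rho)-S_m)x)\ge0$ for all $x\in\M_+$; faithfulness of $\tau$ then forces $0\le S_m\le\mathcal E(\rho)$, and since $\tau(\mathcal E(\rho)-S_m)\to0$ we conclude $\|\mathcal E(\rho)-S_m\|_1\to0$. Thus $\mathcal E(\rho)=\sum_k a_k\rho a_k^*$, with convergence in $\M_*$.

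For (2): amplifying, set $\hat a_k=a_k\oplus a_k\in\M_2(\M)$. For $Y\in\M_2(\mathfrak m_\tau)\cap\M_2(\M)_+$ one checks entrywise that $\mathcal E^{(2)}(Y)=\sum_k\hat a_k Y\hat a_k^*$, and each partial sum $T_m=\sum_{k\le m}\hat a_k Y\hat a_k^*$ is positive because $Y\ge0$. By (1) applied to each of the four entries (splitting the off-diagonal entry into positive components), $T_m\to\mathcal E^{(2)}(Y)$ in the trace norm of $\M_2(\M_*)$. Since trace-norm convergence implies convergence of the pairing against every $Z\in\M_2(\M)_+$, and such pairings detect positivity by faithfulness of $\tau$, the limit $\mathcal E^{(2)}(Y)$ lies in $\M_2(\M)_+$.

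For (3): by Lemma \ref{matrix lemma}, the inequality $\mathcal E(y)^*\mathcal E(y)\le\mathcal E(y^*y)$ is equivalent to positivity of $\left[\begin{smallmatrix}\mathcal E(y^*y)&\mathcal E(y)^*\\\mathcal E(y)&1\end{smallmatrix}\right]$, which one would like to produce by applying (2) to the manifestly positive matrix $\left[\begin{smallmatrix}y^*y&y^*\\ y&1\end{smallmatrix}\right]=\left[\begin{smallmatrix}y^*\\1\end{smallmatrix}\right]\left[\begin{smallmatrix}y&1\end{smallmatrix}\right]$; here I note that $y^*y\in\mathfrak m_\tau$ because $\mathfrak m_\tau$ is a two-sided $*$-ideal. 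The hard part, and the main obstacle, is that the $(2,2)$-entry $1$ does not lie in $\mathfrak m_\tau$ (as $\M$ is not finite), so (2) does not apply verbatim. Replacing $1$ by an increasing net of finite-trace projections $p_\alpha\uparrow1$, applying $\mathcal E^{(2)}$ to $\left[\begin{smallmatrix}y^*y&y^*p_\alpha\\p_\alpha y&p_\alpha\end{smallmatrix}\right]$, and passing to the limit appears to yield only the weaker Schur-complement estimate $\mathcal E(y)^*\mathcal E(1)^{-1}\mathcal E(y)\le\mathcal E(y^*y)$, equivalently the Kadison--Schwarz bound $\mathcal E(y)^*\mathcal E(y)\le\|\mathcal E(1)\|\,\mathcal E(y^*y)$. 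Closing the gap therefore hinges on controlling the compression of $\mathcal E(1)=\sum_k a_k a_k^*$ to the range of $\mathcal E(y)$ by the identity, i.e. on sub-unitality of the Schrödinger map in the relevant direction; this is the step I expect to require the most care.
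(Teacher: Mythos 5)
Your treatment of parts (1) and (2) is correct and follows the same route as the paper. Part (1) is obtained exactly as in the paper's proof, by dualising the Kraus form of $\mathcal E^*$ through the trace pairing and using cyclicity of $\tau$ on the ideal $\mathfrak m_\tau$; your additional observation that the partial sums $S_m$ satisfy $0\le S_m\le\mathcal E(\rho)$ and converge in $\|\cdot\|_1$ is a genuine improvement, since the paper only verifies the identity $\tau(\mathcal E(\rho)x)=\lim_m\tau(xS_m)$ and leaves the mode of convergence implicit. For parts (2) and (3) the paper offers no argument beyond the remark that they ``follow immediately from the inner structure of $\mathcal E$''; your amplification argument for (2), with $\hat a_k=a_k\oplus a_k$, positivity of the partial sums, and detection of positivity of the limit by pairing against $\M_2(\M)_+$, is precisely the argument that remark is gesturing at, and it is sound.

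The obstacle you identify in part (3) is real, and you should not expect to overcome it: statement (3) is false as stated. The Schwarz inequality for the Schr\"odinger-picture Kraus form $\mathcal E(y)=\sum_k a_kya_k^*$ requires $\sum_k a_ka_k^*\le 1$, i.e.\ sub-unitality of $\mathcal E$ itself, and this does not follow from the unitality condition $\sum_k a_k^*a_k=1$ satisfied by $\mathcal E^*$. A concrete counterexample: take $\M=\B(\H)$ with $\H=\ell^2(\mathbb N)$ and $a_k=|e_1\rangle\langle e_k|$, so that $\sum_k a_k^*a_k=1$, the dual map $\mathcal E^*(x)=\langle xe_1,e_1\rangle 1$ is unital, normal, and completely positive, and $\mathcal E(\rho)=\tr(\rho)\,|e_1\rangle\langle e_1|$. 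For the rank-two projection $y=|e_1\rangle\langle e_1|+|e_2\rangle\langle e_2|\in\mathfrak m_\tau$ one gets $\mathcal E(y)^*\mathcal E(y)=4\,|e_1\rangle\langle e_1|$ while $\mathcal E(y^*y)=2\,|e_1\rangle\langle e_1|$, so the asserted inequality fails. The paper's claim that (3) follows immediately is therefore an error; fortunately only part (2) of the lemma is invoked later (in Corollary \ref{i}), and the Schwarz inequality used in the proof of Theorem \ref{fidelity vN alg} is the one for the Heisenberg map $\mathcal E^*$, which genuinely is unital and completely positive, so the error does not propagate. Your fallback estimate $\mathcal E(y)^*\mathcal E(y)\le\|\mathcal E(1)\|\,\mathcal E(y^*y)$ is the correct general statement, valid when $\sum_k a_ka_k^*$ converges to a bounded operator (in the example above it does not even do that).
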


\begin{proof} By definition of inner, 
there exists a sequence 
$\{a_k\}_{k\in\mathbb N}$ in $\M$ such that $\displaystyle\sum_{k=1}^\infty a_k^*   a_k=1$, and
$\mathcal E^*(x)=\displaystyle\sum_{k=1}^\infty a_k^*x a_k$, for every $x\in\M$, 
where the convergence of the sum is with respect to the ultraweak topology of $\M$. Therefore, if $\rho\in\mathfrak s_\tau$
and $x\in \M$, then
\[
\tau\left(\mathcal E(\rho)x\right)
=\tau\left(\rho\mathcal E^*(x)\right)
=\lim_{m\rightarrow\infty}\sum_{k=1}^m\tau( \rho a_k^*xa_k) 
=\lim_{m\rightarrow\infty}\sum_{k=1}^m\tau(  xa_k\rho a_k^*) 
=\lim_{m\rightarrow\infty}\tau\left(x\sum_{k=1}^m a_k\rho a_k^*\right),
\]
and so 
$\mathcal E(\rho)=\displaystyle\sum_{k=1}^\infty a_k \rho a_k^*$, which proves the first statement.
The second and third statements follow immediately from the inner structure of $\mathcal E$.
\end{proof}

By Lemma \ref{in ideal}, if $\sigma,\rho\in\mathfrak s_\tau$, then $|\sigma^{1/2}\rho^{1/2}|$ has finite trace; thus,
we may define fidelity for pairs of density operators in $\M$.
 
\begin{definition} The \emph{fidelity} of a pair of density operators
$\sigma,\rho\in \M$ is the
nonnegative real number $F_\tau(\sigma,\rho)$ defined by
\[
F_\tau(\sigma,\rho)=\tau\left(|\sigma^{1/2}\rho^{1/2}|\right).
\]
\end{definition}

The following is the main result on the fidelity of density operators.

\begin{theorem}[Fidelity in Semifinite von Neumann Algebras]\label{fidelity vN alg}  \hfill
\begin{enumerate}
\item (Basic Properties of Fidelity)
If $\sigma,\rho\in \mathfrak s_\tau$, then
\begin{enumerate}
\item $F_\tau(\sigma,\rho)=F_\tau(\rho,\sigma)$, 
\item $0\leq F_\tau(\sigma,\rho) \leq 1$,
\item $F_\tau(\sigma,\rho)=0$ if and only if $\sigma \bot \rho$, and
\item $F_\tau(\sigma,\rho)=1$ if and only if $\sigma=\rho$.
\end{enumerate} 
\item (Monotonicity of Fidelity) If $\mathcal E:\M_*\rightarrow\M_*$ is a linear map
such that $\mathcal E^{(2)}\left(\M_2(\mathfrak m_\tau)\cap \M_2(\M)_+\right)\subseteq \M_2(\M)_+$, then
$\mathcal E$ is a positive and
\begin{equation}\label{ie:mf sf}
F_\tau(\sigma,\rho)\leq F_\tau\left({\mathcal E}(\sigma),{\mathcal E}(\rho)\right),
\end{equation}
for all $\sigma,\rho\in \mathfrak s_\tau$.
\item (Preservation of Fidelity) If $\M$ is a factor and if $\mathcal E:\M_*\rightarrow\M_*$ is a  
bijective positive linear map such that  
\begin{equation}\label{e:fp sf}
F_\tau(\sigma,\rho)= F_\tau\left({\mathcal E}(\sigma),{\mathcal E}(\rho)\right),
\end{equation}
for all $\sigma,\rho\in \mathfrak s_\tau$, then $ {\mathcal E}^*$ is an automorphism of $\M$.
\end{enumerate}
\end{theorem}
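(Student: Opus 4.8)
The plan is to reduce the statement to a duality argument: I will show that the fidelity-preserving bijection $\mathcal E$ is in fact an \emph{order isomorphism} of the Banach space $\M_*$, and then transport this across the identification $(\M_*)^*=\M$ to conclude that $\mathcal E^*$ is a unital order isomorphism of $\M$, whence a Jordan automorphism by Kadison's Theorem. This is the Schr\"odinger-picture analogue of the route taken in Lemma \ref{order iso} and its corollary. A preliminary observation is that $\mathcal E$ is trace preserving: putting $\sigma=\rho$ in the hypothesis gives $\tau(\mathcal E(\sigma))=F_\tau(\mathcal E(\sigma),\mathcal E(\sigma))=F_\tau(\sigma,\sigma)=\tau(\sigma)=1$ for every $\sigma\in\mathfrak s_\tau$, and since $\mathfrak s_\tau$ spans $\mathfrak m_\tau$ this yields $\tau\circ\mathcal E=\tau$ on $\mathfrak m_\tau$; in particular $\mathcal E$ carries the unit-trace positive cone of $\M_*$ into itself.

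The heart of the argument is to show that $\mathcal E^{-1}$ is positive, and here I would transplant the orthogonal-decomposition argument of Lemma \ref{order iso}. Fix a positive $h\in\M_*$ and set $a=\mathcal E^{-1}(h)$. As a positive map $\mathcal E$ preserves adjoints, so $\mathcal E(a^*)=\mathcal E(a)^*=h=\mathcal E(a)$, and injectivity (which is part of the bijectivity hypothesis) forces $a^*=a$. Writing $a=a_+-a_-$ with $a_\pm\geq0$ and $a_+\bot a_-$ in the algebra $\mathfrak M(\M)$ of $\tau$-measurable operators, then discarding the trivial case $a_-=0$ and normalising $a_\pm$ by their traces, item (1)(c) together with fidelity preservation gives $\mathcal E(a_+)\bot\mathcal E(a_-)$. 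Setting $b=\mathcal E(a_+)$ and $c=\mathcal E(a_-)$, the relations $bc=cb=0$ and $b=h+c$ force $hc=ch$ and then $c^{1/2}hc^{1/2}+c^2=0$; since this exhibits $0$ as a sum of two positive elements of $\mathfrak M(\M)$, both vanish, so $c^2=0$, $c=0$, and hence $a_-=0$ by injectivity. Thus $a=a_+\geq0$, proving $\mathcal E^{-1}$ positive; together with the positivity of $\mathcal E$ this makes $\mathcal E$ an order isomorphism of $\M_*$.

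Next I would dualise. A positive linear bijection of the Banach space $\M_*$ is automatically bounded, and then the open mapping theorem makes its inverse bounded as well, so the Banach-space adjoint $\mathcal E^*:\M\to\M$ is a bounded linear bijection with inverse $(\mathcal E^{-1})^*$; being a predual adjoint, $\mathcal E^*$ is weak$^*$-continuous, i.e.\ normal. Positivity of $\mathcal E$ and of $\mathcal E^{-1}$ dualises, through the pairing $(z,x)\mapsto\tau(zx)$ of $\M_*$ with $\M$, to positivity of $\mathcal E^*$ and $(\mathcal E^*)^{-1}$, so $\mathcal E^*$ is an order isomorphism of the unital C$^*$-algebra $\M$. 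Moreover the identity $\tau\circ\mathcal E=\tau$ of the first paragraph reads, under this pairing, as $\tau(\omega\,\mathcal E^*(1))=\tau(\omega)$ for all $\omega\in\mathfrak m_\tau$, so $\mathcal E^*(1)=1$ and $\mathcal E^*$ is unital. By Kadison's Theorem \cite[Theorem 2.1.3]{Stormer-book}, the unital order isomorphism $\mathcal E^*$ is a Jordan automorphism of $\M$; since $\M$ is a factor, it is therefore either a $*$-automorphism or a $*$-anti-automorphism, which is the asserted conclusion.

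I expect the main obstacle to be the second paragraph. The elements $a,a_\pm,b,c$ produced by $\mathcal E^{-1}$ need not be bounded, so the spectral decomposition and the algebraic identity must be carried out in $\mathfrak M(\M)$ rather than in $\M$, and the basic properties (1)(a)--(1)(d)---in particular the equivalence $F_\tau=0\Leftrightarrow{\bot}$ and fidelity preservation---must first be seen to persist for arbitrary positive unit-trace elements of $\M_*$, not merely for the bounded density operators in $\mathfrak s_\tau$. This extension is plausible because the proofs of part (1) invoke only the tracial arithmetic--geometric mean inequality and functional calculus, both available in $\mathfrak M(\M)$; alternatively one may approximate $a_\pm$ by spectral truncations and pass to the limit using the $\|\cdot\|_1$-continuity of $\mathcal E$. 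A secondary point worth flagging is that, absent a Schwarz or complete-positivity hypothesis, one cannot upgrade ``Jordan automorphism'' to ``$*$-automorphism'': the transpose is a fidelity-preserving positive bijection of $\T(\H)$---singular values are transpose-invariant---whose dual is the transpose $*$-anti-automorphism of $\B(\H)$, so for a type I factor the word ``automorphism'' here genuinely must be read in the Jordan sense.
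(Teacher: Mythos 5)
Your proposal addresses only part (3) of the theorem. Part (2), the monotonicity assertion, is nowhere treated, and it is the part whose proof requires the genuinely new idea in this setting: the paper establishes the variational formula
\[
F_\tau(\sigma,\rho)=\sup\left\{|\tau(x)|\;:\;\left[\begin{smallmatrix}\sigma&x\\ x^*&\rho\end{smallmatrix}\right]\in\M_2(\mathfrak m_\tau)\cap\M_2(\M)_+\right\},
\]
using the factorisation $x=\sigma^{1/2}y\rho^{1/2}$ with $\|y\|\le1$ from Lemma \ref{matrix lemma} together with Lemma \ref{basic fact 3} (that $\tau^{(2)}$ carries positive $2\times2$ matrices over $\mathfrak m_\tau$ to positive complex matrices). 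The hypothesis $\mathcal E^{(2)}\left(\M_2(\mathfrak m_\tau)\cap \M_2(\M)_+\right)\subseteq \M_2(\M)_+$ is tailored precisely so that such block matrices can be pushed through $\mathcal E$, and monotonicity follows by comparing the two suprema. Nothing in your write-up supplies this or a substitute; your duality and order-isomorphism machinery cannot, since part (2) assumes neither bijectivity nor fidelity preservation. Part (1) is only gestured at, but the paper likewise disposes of it by equation \eqref{E:symmetry} and citations to \cite{farenick--manjegani2005}, so that omission is minor.

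For part (3) your route is essentially the paper's: show $\mathcal E$ is an order isomorphism of $\M_*$, dualise to obtain a unital normal order isomorphism $\mathcal E^*$ of $\M$, and apply Kadison's theorem. (Your transplant of the orthogonality argument from Lemma \ref{order iso} is in fact more careful than the paper's one-line verification that $(\mathcal E^*)^{-1}$ is positive, which tacitly uses positivity of $\mathcal E^{-1}$ on density operators.) But you stop at ``Jordan automorphism,'' whereas the theorem asserts that $\mathcal E^*$ is an automorphism; the paper closes that gap by asserting that $\mathcal E^*$ satisfies the Schwarz inequality and invoking \cite[Corollary 3.6]{stormer1965}. Your transpose example is a legitimate objection to that last step---under mere positivity of $\mathcal E$ the Schwarz inequality for $\mathcal E^*$ is not justified, and the transpose on $\T(\H)$ does preserve fidelity while dualising to an anti-automorphism---so you have identified a real defect in the statement or its proof rather than merely failing to reproduce it. Nevertheless, as a proof of the theorem as written your argument terminates one step short of the asserted conclusion, and the issue you flag of extending fidelity preservation and property (1)(c) from $\mathfrak s_\tau$ to possibly unbounded positive elements of $\M_*$ (needed for your orthogonality step applied to $a_\pm=\mathcal E^{-1}$-images) is acknowledged but not actually closed.
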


\begin{proof}  Equation (\ref{E:symmetry}) shows that $F_\tau(\sigma,\rho)=F_\tau(\rho,\sigma)$, whereas
\cite[Corollary 3.2]{farenick--manjegani2005} yields $0\leq F_\tau(\sigma,\rho) \leq 1$. It is clear that
$F_\tau(\sigma,\rho)=0$ if and only if $\sigma \bot \rho$, while 
\cite[Theorem 3.4]{farenick--manjegani2005} shows that $F_\tau(\sigma,\rho)=1$ if and only if $\sigma=\rho$.

To establish monotonicity of fidelity, we again draw upon a method of proof given in Watrous's monograph \cite{Watrous-book}. First, however, note that the hypothesis 
$\mathcal E^{(2)}\left(\M_2(\mathfrak m_\tau)\cap \M_2(\M)_+\right)\subseteq \M_2(\M)_+$
immediately yields the positivity of $\mathcal E$ by considering positive $2\times 2$ matrices of the form
$\left[\begin{array}{cc} y & 0 \\ 0 & 0\end{array}\right]$, for $y\in\mathfrak m_\tau\cap \M_+$.

Suppose that $\sigma,\rho\in\mathfrak s_\tau$ and $x\in \M$, and consider the matrix 
$X=\left[\begin{array}{cc} \sigma&x\\ x^*&\rho\end{array}\right]$ in $\M_2(\M)$. By Lemma \ref{matrix lemma},
the matrix $X\in \M_2(\M)_+$ if and only if $x=\sigma^{1/2}y\rho^{1/2}$ for some $y\in\M$ with $\|y\|\leq 1$.
Hence,  
\[
\begin{array}{rcl}
\sup\left\{|\tau(x)|\,|\, \left[\begin{array}{cc} \sigma&x\\ x^*&\rho\end{array}\right]\in M_2(\mathfrak m_\tau)_+\right\} &=& 
\sup\left\{|\tau(\rho^{1/2}\sigma^{1/2}y)\,|\,y\in\M,\;\|y\|\leq 1\right\} \\ && \\
&=& \|\rho^{1/2}\sigma^{1/2}\|_1  \\ && \\
&=&\tau\left( |\rho^{1/2}\sigma^{1/2}|\right)  \\ && \\
&=& F_\tau(\sigma,\rho).
\end{array}
\]

We shall show that a similar formula applies to the fidelity of the density operator
$\mathcal E(\sigma)$ and $\mathcal E(\rho)$. To this end,
fix $y\in\M$ with $\|y\|\leq 1$ and let $x=\sigma^{1/2}y\rho^{1/2}$. 
Thus, the matrix $\left[\begin{array}{cc} \sigma&x\\ x^*&\rho\end{array}\right]$
in $\M_2(\mathfrak m_\tau)$ belongs to the cone $\M_2(\M)_+$. Therefore, by the hypothesis on ${\mathcal E}$, 
we deduce that
\[
\left[\begin{array}{cc} {\mathcal E}(\sigma)&{\mathcal E}(x)\\ {\mathcal E}(x^*)&{\mathcal E}(\rho)\end{array}\right]  
\in  \M_2(\M)_+.
\]
Hence, Lemma \ref{basic fact 3} yields 
\[
\left[\begin{array}{cc} \tau\left({\mathcal E}(\sigma)\right)&\tau\left({\mathcal E}(x)\right)\\ \tau\left({\mathcal E}(x^*)\right)&\tau\left({\mathcal E}(\rho)\right)\end{array}\right]  =
\tau^{(2)}\left(\left[\begin{array}{cc} {\mathcal E}(\sigma)&{\mathcal E}(x)\\ {\mathcal E}(x^*)&{\mathcal E}(\rho)\end{array}\right]\right)
\in \M_2(\mathbb C)_+,
\]
and so $|\tau(x)|=|\tau({\mathcal E}(x))|\leq F_\tau\left({\mathcal E}(\sigma),{\mathcal E}(\rho)\right)$. Therefore, the
supremum of the real numbers $|\tau(x)|$
over all $x\in\mathfrak m_\tau$ 
for which $\left[\begin{array}{cc} \sigma&x\\ x^*&\rho\end{array}\right]\in \M_2(\M)_+$ is also bounded above by 
$ F_\tau\left({\mathcal E}(\sigma),{\mathcal E}(\rho)\right)$, which implies that
$F_\tau(\sigma,\rho)\leq F_\tau\left({\mathcal E}(\sigma),{\mathcal E}(\rho)\right)$.  

Lastly, assume that $ {\mathcal E}$ is a positive linear bijection that preserves fidelity.
Thus, $\mathcal E^*$ is an invertible operator on $\M$.
 If $x\in \M_+$ and $y=( {\mathcal E}^*)^{-1}(x)$, then for every $\rho\in\mathfrak s_\tau$,
 \[
 \tau\left( y\rho\right) = \tau\left( ( {\mathcal E}^*)^{-1}(x) \rho\right)=\tau\left( x {\mathcal E}^{-1}(\rho)\right)\geq0.
 \]
 That is, $\omega(y)\geq0$ for every normal state $\omega$ on $\M$; hence,
$y\in\M_+$, which proves that $( {\mathcal E}^*)^{-1}$ is positive linear map. Thus, $ {\mathcal E}^*$ is a linear order isomorphism.
 By \cite[Theorem 5]{kadison1951}, $ {\mathcal E}^*$ is a Jordan isomorphism. 
 Because $\mathcal E^*$ satisfies the Schwarz inequality $\mathcal E^*(x^*x)\geq\mathcal E^*(x^*)\mathcal E^*(x)$ for all $x\in\M$, the Jordan isomorphism $\mathcal E^*$ is in fact an
 automorphism \cite[Corollary 3.6]{stormer1965}.
 \end{proof}

 \begin{corollary}\label{i} If ${\mathcal E}$ is an inner channel, then
 $F_\tau(\sigma,\rho)\leq F_\tau\left({\mathcal E}(\sigma),{\mathcal E}(\rho)\right)$,
for all $\sigma,\rho\in \mathfrak s_\tau$. 
 \end{corollary}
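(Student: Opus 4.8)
The plan is to recognize that this corollary is an immediate specialization of the Monotonicity of Fidelity statement, part (2) of Theorem \ref{fidelity vN alg}, once the appropriate positivity hypothesis is verified. That hypothesis requires a linear map $\mathcal E:\M_*\rightarrow\M_*$ to satisfy $\mathcal E^{(2)}\left(\M_2(\mathfrak m_\tau)\cap\M_2(\M)_+\right)\subseteq\M_2(\M)_+$, and the conclusion is exactly the desired inequality $F_\tau(\sigma,\rho)\leq F_\tau\left(\mathcal E(\sigma),\mathcal E(\rho)\right)$ for all $\sigma,\rho\in\mathfrak s_\tau$. So the only thing to check is that an inner channel meets this condition.

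First I would invoke Lemma \ref{inner cp}. Its part (2) states precisely that any inner channel $\mathcal E:\M_*\rightarrow\M_*$ satisfies $\mathcal E^{(2)}\left(\M_2(\mathfrak m_\tau)\cap\M_2(\M)_+\right)\subseteq\M_2(\M)_+$. This is the structural input that packages the Kraus decomposition of $\mathcal E^*$ (equivalently, the Kraus form $\mathcal E(\rho)=\sum_k a_k\rho a_k^*$ of part (1)) into a statement about $2\times 2$ amplifications. With this in hand, the hypothesis of Theorem \ref{fidelity vN alg}(2) is verified verbatim for $\mathcal E$.

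Then I would simply apply Theorem \ref{fidelity vN alg}(2) to conclude that $\mathcal E$ is positive and that $F_\tau(\sigma,\rho)\leq F_\tau\left(\mathcal E(\sigma),\mathcal E(\rho)\right)$ for all $\sigma,\rho\in\mathfrak s_\tau$, which is exactly the claim. There is no genuine obstacle here, since all of the analytic work has already been done: the $\left[\begin{smallmatrix}\sigma&x\\ x^*&\rho\end{smallmatrix}\right]$ variational characterization of fidelity and the application of Lemma \ref{basic fact 3} in the proof of the monotonicity statement, together with the verification of the $2$-positivity condition for inner maps in Lemma \ref{inner cp}(2). The corollary therefore records the upshot that the classical monotonicity of Josza fidelity under Kraus-form channels survives in the semifinite von Neumann algebra setting, phrased in the Schr\"odinger picture. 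If anything merits a sentence of care, it is only to note that an inner channel is in particular a channel, so that $\mathcal E$ maps $\mathfrak s_\tau$ into itself and the quantities $F_\tau\left(\mathcal E(\sigma),\mathcal E(\rho)\right)$ are well defined by Lemma \ref{in ideal}.
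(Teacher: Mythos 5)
Your proposal is correct and matches the paper's own proof exactly: the paper also cites Lemma \ref{inner cp} to verify that an inner channel satisfies $\mathcal E^{(2)}\left(\M_2(\mathfrak m_\tau)\cap \M_2(\M)_+\right)\subseteq \M_2(\M)_+$ and then applies Theorem \ref{fidelity vN alg}(2). Nothing further is needed.
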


\begin{proof} By Lemma \ref{inner cp}, $\mathcal E$
satisfies $\mathcal E^{(2)}\left(\M_2(\mathfrak m_\tau)\cap \M_2(\M)_+\right)\subseteq \M_2(\M)_+$.
Therefore, by  Theorem \ref{fidelity vN alg}, $\mathcal E$ has the monotonicity property for fidelity.
\end{proof}

In the classical setting of type I factors, the following known result \cite{molnar2001} is recovered.

 \begin{corollary}[Moln\'ar]\label{mlnr} 
 If ${\mathcal E}:\T(\H)\rightarrow\T(\H)$ is a  channel, then
 $F_\tau(\sigma,\rho)\leq F_\tau\left({\mathcal E}(\sigma),{\mathcal E}(\rho)\right)$,
for all density operators $\sigma,\rho\in \T(\H)$. Furthermore, if
 ${\mathcal E}:\T(\H)\rightarrow\T(\H)$ is  surjective channel and
 preserves fidelity, then ${\mathcal E}$ is a unitary channel.
 \end{corollary}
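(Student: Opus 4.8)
The plan is to obtain this corollary as a direct specialisation of the von Neumann algebra results to the type~I factor $\M=\B(\H)$, whose predual is $\M_*=\T(\H)$ and whose canonical tracial weight is $\tau=\tr$ (here $\H$ infinite-dimensional, so that $\M$ is semifinite but not finite). For the monotonicity assertion I would first observe that every channel $\mathcal E$ on $\T(\H)$ is inner: its dual $\mathcal E^*$ is, by hypothesis, a normal unital completely positive map on $\B(\H)$, and Kraus's theorem (recalled in the Introduction) supplies a family $\{v_n\}$ with $\sum_n v_n^*v_n=1$ such that $\mathcal E^*(x)=\sum_n v_n^* x v_n$. Thus $\mathcal E$ is an inner channel, and the inequality $F_\tau(\sigma,\rho)\le F_\tau(\mathcal E(\sigma),\mathcal E(\rho))$ follows at once from Corollary~\ref{i}.

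For the preservation statement, the strategy is to verify the hypotheses of part~(3) of Theorem~\ref{fidelity vN alg} and then translate its conclusion back to the Schr\"odinger picture. Since $\B(\H)$ is a factor and a channel is completely positive (hence positive), it remains only to upgrade the assumed surjectivity of $\mathcal E$ to bijectivity. I would establish injectivity by the argument of Lemma~\ref{order iso}: if $\mathcal E(\sigma)=\mathcal E(\rho)$ for density operators $\sigma,\rho$, then $1=F_\tau(\mathcal E(\sigma),\mathcal E(\rho))=F_\tau(\sigma,\rho)$ forces $\sigma=\rho$ by Theorem~\ref{fidelity vN alg}(1), and splitting a general element into its positive/negative and real/imaginary parts extends injectivity to all of $\T(\H)$. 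With $\mathcal E$ now a bijective positive map preserving fidelity, Theorem~\ref{fidelity vN alg}(3) yields that $\mathcal E^*$ is an automorphism of $\B(\H)$.

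The final step converts this automorphism into a unitary channel. Every automorphism of $\B(\H)$ is inner, so there is a unitary $u\in\B(\H)$ with $\mathcal E^*(x)=u^*xu$ for all $x\in\B(\H)$. Using the defining duality $\tr(\mathcal E(s)x)=\tr(s\,\mathcal E^*(x))$ together with the trace identity $\tr(s u^*xu)=\tr(usu^*x)$, I would conclude that $\mathcal E(s)=usu^*$ for every $s\in\T(\H)$; that is, $\mathcal E$ is precisely the unitary channel implemented by $u$.

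I expect the only genuinely delicate point to be the passage between the two pictures: one must be confident that the abstract automorphism produced by Theorem~\ref{fidelity vN alg}(3)---a statement about the Heisenberg-picture dual $\mathcal E^*$---corresponds, via the trace duality, to a concrete unitary conjugation on the Schr\"odinger side. The remaining ingredients (innerness of $\T(\H)$-channels, bijectivity from fidelity preservation, and innerness of $\B(\H)$-automorphisms) are standard and follow the patterns already established in Lemma~\ref{order iso} and Corollary~\ref{i}.
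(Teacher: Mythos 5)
Your proposal is correct and follows essentially the same route as the paper's own proof: Kraus's theorem to see that every channel on $\T(\H)$ is inner (hence Corollary~\ref{i} gives monotonicity), the argument of Lemma~\ref{order iso} for injectivity, Theorem~\ref{fidelity vN alg}(3) to conclude that $\mathcal E^*$ is an automorphism of $\B(\H)$, and innerness of automorphisms of $\B(\H)$ plus trace duality to realise $\mathcal E$ as a unitary channel. No substantive differences.
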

 
 \begin{proof} In this case, $\M=\B(\H)$ and
 $\M_*=\mathfrak m_\tau=\T(\H)$. 
 By Kraus's theorem \cite{kraus1971}, every channel on $\T(\H)$ is inner. Hence, 
 $\mathcal E$ has the monotonicity property for fidelity.
 Moreover, 
 the proof of Lemma \ref{order iso} is valid in $\T(\H)$,
 and so $\mathcal E$ is injective. Therefore, coupled with the hypothesis that $\mathcal E$ is surjective, 
 Theorem \ref{fidelity vN alg} indicates that ${\mathcal E}^*$ is an automorphism of $\B(\H)$. Hence, there is a 
 unitary $u\in\B(\H)$ such that ${\mathcal E}^*(x)=u^*xu$ for all $x\in\B(\H)$. 
 That is, ${\mathcal E}(s)=usu^*$ for every $s\in\T(\H)$.
 \end{proof}

%%%%%%%%%%%%%%
\section{Discussion}

The principal achievements of this paper broaden the context in which the classical notion of density operator 
has meaning. In the first of these extensions the ambient operator algebra is a unital C$^*$-algebra equipped with a
faithful tracial functional, whereas in the second type of extension the ambient operator algebra is a semifinite von Neumann algebra.
Examples of such operator algebras are described below.

While the full matrix algebra $\M_d(\mathbb C)$ of $d\times d$ complex matrices is one such example,
there are many other examples that are infinite-dimensional. One of the most important of such examples is the Fermion algebra
$\A=\displaystyle\bigotimes_1^\infty \M_2(\mathbb C)$, which is the C$^*$-algebra representing the infinite canonical anticommutation relations
arising from the quantum mechanical study of fermions and which has a (unique) faithful trace functional $\tau$ that arises from the normalised 
canonical trace on $\M_{2k}(\mathbb C)\cong\displaystyle\bigotimes_1^k\M_2(\mathbb C)$, for all $k\in\mathbb N$ \cite[\S11.9]{Farenick-book2}.
Thus, every positive $\tau$-preserving linear map $\mathcal E:\A\rightarrow\A$ satisfies the monotonicity property of fidelity, by
Theorem \ref{miza}. Furthermore, because the Fermion C$^*$-algebra $\A$  is also finite and quasi-transitive \cite[Theorem 5.10]{rickart1946},
a surjective fidelity-preserving Schwarz map $\mathcal E:\A\rightarrow\A$ is necessarily an 
automorphism of $\A$, by Theorem \ref{main result 1}.

In considering the setting of von Neumann algebras, one of course has the classical case of $\B(\H)$ and its predual $\T(\H)$. In this setting, a (classical)
density operator is a positive (compact) operator $\rho$ on $\H$ for which the sequence of eigenvalues of 
$\rho$ (repeated according to geometric multiplicity) is summable. These type I factors are precisely the von Neumann algebras that arise
in ordinary quantum mechanics.

Quantum statistical mechanics involving (infinitely) many particles is modelled mathematically on von Neumann algebras of type II and III. That is,
when the number of particles is very large, then a mathematical model that treats large as infinite is adopted, leading to so-called infinite systems.
A nice explanation of this approach is provided by Kadison in \cite[p.~13394]{kadison1998}. 
Of particular interest are those von Neumann algebras that arise through a limiting process involving finite quantum systems.
A specific example is afforded by
the hyperfinite II${}_1$-factor $\osr$, which is constructed from the Fermion algebra as follows. Consider the 
trace $\tau$ on the Fermion algebra $\A=\displaystyle\bigotimes_1^\infty \M_2(\mathbb C)$, normalised so that $\tau(1)=1$. 
By the Gelfand-Naimark-Segal theorem,
there exist a unital homomorphism $\pi:\A\rightarrow\B(\H)$ and a unit vector $\xi\in \H$ such that $\{\pi(a)\xi\,:\,a\in\A\}$ is dense in $\H$ and
$\tau(a)=\langle\pi(a)\xi,\xi\rangle$, for every $a\in\A$.
If $\osr$ denotes the closure of $\pi(\A)$ in $\B(\H)$ with respect to the strong operator topology, then the trace $\tau$ extends to a faithful trace on $\osr$
(denoted by $\tau$ again) such that $\tau(x)=\langle x\xi,\xi\rangle$, for every $x\in \osr$ \cite[Corollary 11.105]{Farenick-book2}.
The von Neumann algebra $\osr$ is a finite factor (and, hence, quasi-transitive), which means that the C$^*$-algebra results obtained herein apply to $\osr$.
In particular, every surjective channel on $\osr$ that preserves fidelity is necessarily an automorphism of $\osr$. 
 
With the semifinite but non-finite setting, continue with the notation of the previous paragraph and let $\M$ denote the closure, 
with respect to the strong operator topology, of the linear span $\M_0$ all operators acting on the Hilbert space $\H\otimes\H$
of the form $\rho\otimes y$, where $\rho\in\T(\H)$ and $y\in\osr$ are arbitrary. The linear map $\M_0\rightarrow\mathbb C$
that sends each $\rho\otimes y$ to $\tr(\rho)\tau(y)$ extends to a tracial weight $\mbox{\rm Tr}_{\M}$ on $\M$. In particular, linear combinations
of finitely many elements of the form $\rho\otimes y$, where $\rho\in\T(\H)$ and $y\in\osr$, lie in the trace ideal $\mathfrak m_{\mbox{Tr}_{\M}}$.

Finally, in a different direction, the effort undertaken herein to establish the monotonicity of
fidelity is substantially greater than the effort required to prove some of the other basic properties of fidelity. 
This fact of effort is very reminiscent of recent work
of M\"uller-Hermes and Reeb \cite{muller-hermes--reeb2016} in proving the monotonicity of quantum relative entropy for trace-preserving
positive linear maps on $\T(\H)$. Although Schwarz maps are positive but not nessecarily $2$-positive, it would be of interest to know whether
our main results on fidelity preservation, which rely upon sufficient level of
positivity as embodied by the Schwarz inequality, hold for arbitrary trace-preserving positive linear maps.

%%%%%%%%%%%%%%
\section*{Acknowledgements}

This work of the first author is supported in part by the Discovery Grant program of
the Natural Sciences and Engineering Research Council of Canada. 
We thank Sarah Plosker for introducing us to the use of fidelity in quantum information theory and 
for bringing Moln\'ar's work  \cite{molnar2001} to our attention.

%%%%%%%%%%%%%%%%%%%%%%%%%%% bibliography %%%%%%%%%%%%%%%%%%%%%%%%%
%\bibliographystyle{abbrv}
%%%%\bibliographystyle{amsplain}
%\bibliography{doug-refs}
%\end{document}

%%%%%%%%%%%%%%%%%%%%%%%%%%% bibliography %%%%%%%%%%%%%%%%%%%%%%%%%

\end{document}